\documentclass{IEEEtran}
\IEEEoverridecommandlockouts
\usepackage{cite}
\usepackage{amsmath,amssymb,amsfonts}
\usepackage{amsthm}
\usepackage{balance}
\usepackage{algorithmic}
\usepackage{graphicx}
\usepackage{textcomp}
\usepackage{float}
\usepackage{xcolor}
\usepackage{multirow}
\usepackage{subfigure}
\usepackage{adjustbox}
\usepackage{stfloats}
\setlength{\columnsep}{0.2 in}
\def\BibTeX{{\rm B\kern-.05em{\sc i\kern-.025em b}\kern-.08em T\kern-.1667em\lower.7ex\hbox{E}\kern-.125emX}}
\newtheorem{proposition}{Proposition}
\newtheorem{corollary}{Corollary}

\begin{document}

\title{\LARGE Securing the Inter-Spacecraft Links: \\ Physical Layer Key Generation from Doppler Frequency Shift \\
{}
}

\author{ \IEEEauthorblockN{ Ozan Alp Topal, \IEEEmembership{Graduate Student Member, IEEE}, Gunes Karabulut Kurt, \IEEEmembership{Senior Member, IEEE}, Halim Yanikomeroglu, \IEEEmembership{Fellow, IEEE}}

\thanks{{ O. A. Topal and G. Karabulut Kurt are with the {Department of Electronics and Communication Engineering}, {Istanbul Technical University}, Istanbul, 34469 Turkey, (email:	\{topalo, gkurt\}@itu.edu.tr).}

 G. Karabulut Kurt is also with the Department of Electrical Engineering,  Polytechnique Montr\'eal, Montr\'eal, Canada, (e-mail: gunes.kurt@polymtl.ca).

		 
{H. Yanikomeroglu is with the Department of Systems and Computer Engineering}, {Carleton University}, Ottawa, ON, K1S 5B6, Canada,	(email: halim@sce.carleton.ca).}

}

\maketitle

\begin{abstract}
In this work, we propose a secret key generation procedure specifically designed for the inter-spacecraft communication links. As a novel secrecy source, the spacecrafts utilize Doppler frequency shift based measurements. In this way, the mobilities of the communication devices are exploited to generate secret keys, where this resource can be utilized in the environments that the channel fading based key generation methods are not available. The mobility of a spacecraft  is modeled as the superposition of a pre-determined component and a dynamic component. We derive the maximum achievable secret key generation rate from the Doppler frequency shift. The proposed secret key generation procedure extracts the Doppler frequency shift in the form of nominal power spectral density samples (NPSDS). We propose a maximum-likelihood (ML) estimation for the NPSDS at the spacecrafts, then a uniform quantizer is utilized to obtain secret key bits. The key disagreement rate (KDR) is analytically obtained for the proposed key generation procedure. Through numerical studies, the tightness of the provided approximations {\color{black} is} shown. Both the theoretical and numerical results demonstrate the validity and the practicality of the presented physical layer key generation procedure considering the security of the communication links of spacecrafts. 
\end{abstract}

\begin{IEEEkeywords}
Doppler frequency shift, inter-{\color{black}spacecraft} link security, physical layer key generation, space network.
\end{IEEEkeywords}

\section{Introduction}
Exploring, understanding, and colonizing other planets {\color{black} become} the future roadmap for human existence in the space. The availability of low-cost devices and advanced rocket technology make this roadmap more attainable, where developing robust communication systems {\color{black}is an} integral part for future space missions. As a starting point, today mostly the small-space satellites are launched into low-orbital space by top-tier companies like SpaceX, Google, Facebook, Virgin Galactic\cite{first_ref}. 
 
 However, exploration of outer space is expected to become the near future competition for countries and private companies. For example, with the Artemis program, NASA targets to land the first human on the Moon by 2024\cite{NASA}. By the end of the decade, NASA is planning to form sustainable space missions that eventually sending astronauts to Mars. Another important space mission is carried out by Chinese National Space Administration, where a relay satellite is utilized to establish a reliable communication link in the dark side of the Moon.  In order to join the space exploration race, these spacecrafts require innovative technologies that provide a lower cost of production, launch, and maintenance. This requirement limits the sensing and the communication capabilities predicting their limited size, weight, and power. Considering these characteristics, the spacecrafts that will be utilized in the future space missions act as space cyber-physical networks, where the control and connectivity of many low-cost and software-enabled controllable devices are the main priority. The security of the communication channel and communication devices becomes the priority of the cyber-physical systems in order to secure the whole network from the large scale accidents initiated by the security breaches of low-cost devices \cite{magazin}.

 
 \begin{figure}[t]
 	\centering
 	\includegraphics[width=\linewidth]{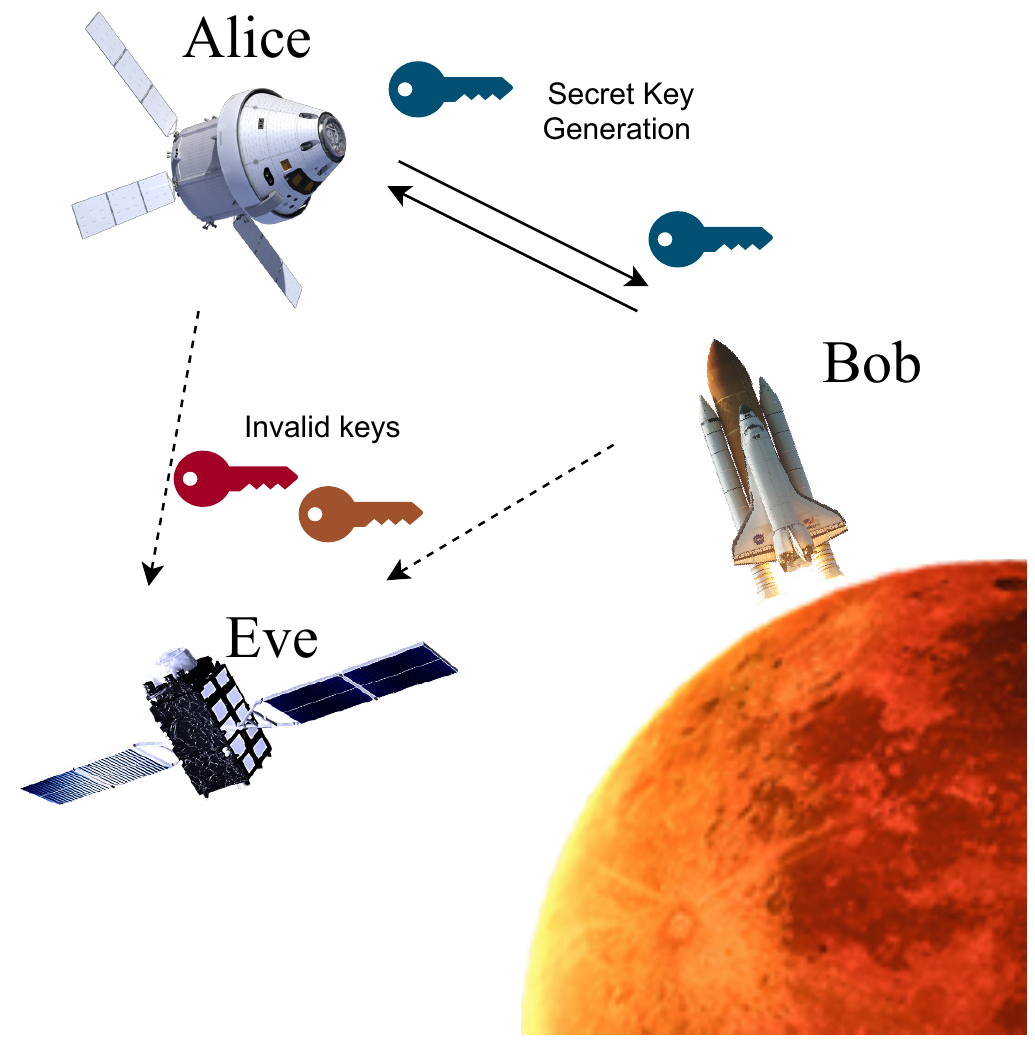}
 	\caption{An illustration of an inter-spacecraft link (ISL). Alice, Bob and Eve represent three different {\color{black}spacecraft}s. Alice and Bob try to establish a secure communication line, while Eve tries to eavesdrop their communication.}
 	\label{fig:LEO}
 \end{figure}

 
The open nature of the wireless communication channel creates security breaches for the space networks similar to the wireless cyber-physical systems. Especially considering the critical research and exploration information harnessed by the space mission, the security breach of an inter-spacecraft link (ISL) would create the loss of highly profitable information. Considering the high operational costs and technical difficulties of restarting a space mission, the security becomes a non-negotiable concept in space networks. Conventionally, in symmetric key encryption mechanisms, the security has enabled by cryptographic methods, where the message is encrypted with a bit sequence at the transmitter and decrypted with the bit sequence at the receiver. This sequence is named as the secret key, and the sharing of the secret key becomes a bottleneck for the low-battery power and low chip size communication devices. Especially to obtain a post-quantum secure communication systems, the longer secret keys are required to be shared, where the complexity of these operations  become  unattainable for the low-cost devices.

Alternatively, physical layer key generation methods are proposed for obtaining identical secret keys at the distant nodes without any key sharing complexities. The distant communication devices {\color{black}exploit} the unique characteristics of the physical layer link that only identical or very similar for them, and seem random for other observers to extract identical secret keys. For example, the received signal strengths have been proposed as a secrecy resource \cite{RSS} or the reciprocal channel estimates are utilized for physical layer secret key generation \cite{Topal}. Also, the spatial correlations of the channel fadings are analyzed as another resource of secrecy \cite{Topal_MIMO}. These methods provide {\color{black} a} high amount of secrecy under the assumption of {\color{black} a} dynamic environment and channel fading. However, under the static channel fading assumptions as in the deep space networks, these methods become inapplicable since the channel based attenuations become predictable. Therefore, the utilization of cryptical methods as well as channel based secret key generation methods becomes unfavorable for the inter-spacecraft communication networks.  

  
\subsection{Contributions}
As a solution to the discussed problem, in \cite{WiSee}, we have proposed a mobility based secret key generation scheme, where the spacecrafts exploits the reciprocal Doppler frequency shift measurements as a secrecy resource. More specifically, spacecrafts utilize the measurements of nominal power spectral density samples (NPSDS), where the effect of Doppler frequency {\color{black}shifts} are fully reserved. As a first step, the communication nodes exchange pilot signals, and estimate NPSDS with maximum likelihood (ML) estimator. The estimations are quantized to generate secret key bits. 

As an extension of our previous work in \cite{WiSee}, the main contributions of this paper can be enlisted as
\begin{itemize}
\item We model the mobility of the spacecrafts as the superposition of a publicly known, predictable component and a dynamic and randomly modeled component.
\item The maximum achievable secret key rate from the Doppler frequency shift measurements are provided. The dynamic mobility model of the spacecrafts {\color{black}is} modeled with a normalized version of the Brownian motion to obtain the maximum amount of secrecy. 
\item The key disagreement rate (KDR) has been analytically modeled considering the given random mobility model.
\item The numerical analysis {\color{black}shows} the tightness of the given approximations, and illustrates the maximum achievable secret key rate and KDR performance.  
\end{itemize} 

\subsection{Organization}
In the following, we provide related work in physical layer key generation. In Section II, we provide the considered system model, and the performance of the Doppler frequency shift as a secrecy resource. In Section III,  we explain the proposed secret key generation procedure. In Section IV, we provide theoretical analysis on the key disagreement rate of the proposed method. In Section V, we provide numerical analysis. In Section VI, we conclude the paper and provide the future work directions.
\subsection{Notation}
Scalar variables are denoted by italic symbols, vectors are denoted by boldface symbols. $\rho(x)$ denotes the probability density function of the random variable $X$. $\mathcal{N}(\mu,\sigma^2)$ denotes the normal random variable with mean $\mu$ and variance $\sigma^2$. The non-central chi-squared distribution with $k$ degrees of freedom and $\delta$ non-centrality parameter is denoted by ${\chi'}_k^2(\delta)$. $\text{erf}(\cdot)$ denotes the error function as described in \cite{elements}. $h(X)$ denotes the differential entropy of the random variable $X$. $\text{log}(\cdot)$ denotes the natural logarithm. $\mathcal{I}_v(\cdot)$ denotes the $v^{\text{th}}$ order modified Bessel function. $||\mathbf{d}||$ denotes the Euclidean norm of the vector $\mathbf{d}$. $\mathbb{E}\{\cdot\}$ denotes the expectation operator. 

\section{Related Works}
\begin{table*}[t]
	\centering
	\footnotesize
	\caption{Comparison of the related work.}
	\color{black}
	\begin{tabular}{|l|l|l|l|l|}
		\hline
		\textbf{Randomness Resource}               & \textbf{Reference}                    & \textbf{Implementation} & \textbf{Channel assumption} & \textbf{\begin{tabular}[c]{@{}l@{}}Applicability\\ to Space Networks\end{tabular}} \\ \hline
		\multirow{3}{*}{Received signal strengths} & \cite{rss_1}        & indoor                  & Non Line Of Sight (NLOS)    & No                                                                                 \\ \cline{2-5} 
		& \cite{rss_2}        & outdoor                 & NLOS  and deep fade         & No                                                                                 \\ \cline{2-5} 
		& \cite{rss_3}, \cite{csi_vs_rss}        & indoor                  & LoS and NLoS                & No                                                                                 \\ \hline
		\multirow{6}{*}{Channel amplitude}         &    \cite{Topal}                                   &     indoor                   &      LoS                       & No                                                                                 \\ \cline{2-5} 
		&         \cite{Topal_MIMO}                              &      indoor/outdoor                   &   NLoS, spatial correlation                          & No                                                                                 \\ \cline{2-5} 
		& \cite{beam_spatial} &          outdoor               &      LoS/ spatial correlation                       & Limited                                                                               \\ \cline{2-5} 
		&       \cite{csi_v2x}                                &     outdoor/vehicular                    &         LoS/NLoS - dynamic                    & No                                                                                \\ \cline{2-5} 
		&     \cite{csi_vs_rss}, \cite{csi_2}                                  &  indoor                       &     NLoS                        &              No                                                                     \\ \hline
		\multirow{1}{*}{Channel phase}             &  \cite{phase_1}                                      &       outdoor                &    NLoS                         & Yes                                                                      \\ \hline
		Power spectral density                     &        \cite{underwater}                              &               underwater         &       NLoS - dynamic                        & Limited                                                                            \\ \hline
		Voltage/current deviations                 &  \cite{power_line}                                     & power line              & n/a                         & No                                                                                 \\ \hline
		Carrier frequency offset                    &  \cite{cfo_1}                    &    outdoor/vehicular                     &       NLoS/LoS - dynamic                      & Yes                                                                                \\ \hline
		Doppler frequency shift                    & this work                    &   outdoor/ high-mobility                      &    LoS - highly dynamic                         & Yes                                                                                \\ \hline
	\end{tabular}
\label{tab:references}
\end{table*}
{\color{black} Table \ref{tab:references} provides a comparison for the state-of-the-art (SOTA) PHY key generation methods including their applicability for the space communication networks.  In terrestrial networks, especially for the sub 6 GHz systems, the multi-path fading is the most dominant factor that deteriorates the communication performance. The multipath fading in terrestrial networks is generally modeled by Rayleigh or Rician distributions. To ensure reliable communications, the devices already apply channel estimation and equalization. The wireless devices have the measurements of channel state information (CSI) or received signal strengths (RSS) available. One important enabler for utilizing these estimates for the encryption purposes is that they are reciprocal for two distant nodes within the coherence time \cite{Topal}. Therefore, they have identical secrets without  sharing any information about the secrets itself. The main limiting factor in the space networks is that the fading channel is dominantly affected by the path loss term, where the main path is very dominant to other reflections and diffractions. In most of the cases such as communication between two space shuttles, the EM waves propagate in space without encountering any blockage. In that case, only fading factor is the free-space path loss, where it is same for the devices with the same distances. Hence, CSI or RSS based secret key generation methods cannot be applied considering the space communication networks.

On the other hand, very high mobility of the spacecrafts result into frequency and phase shifts for communication devices. The channel phase-based secret key generation methods are already available in the literature that also given in Table \ref{tab:references}. They may also utilized in the space communication networks, but the estimation process of the channel phases are more complex, and may include more erroneous results than the rest of the secrecy extraction schemes \cite{phase}.

As the main novelty, in this work, we propose a Doppler frequency shift based secret key generation procedure first time in the literature.} As described in the Figure \ref{fig:LEO}, we consider any three ISLs of three distinct spacecrafts. In the considered scenario, we  Alice and Bob exploit the symmetric measurements of the Doppler frequencies as a secrecy source for the first time in the literature. Since the mobility of the spacecrafts are updated in each communication block, the randomness in the mobility reflects on the Doppler frequency measurements identically for the reciprocal links.


In addition to different key generation schemes, another major contribution is increasing the efficiency of the proposed key generation mechanisms. The authors of \cite{PCA} make use of principle component analysis to decrease the key disagreement rates at the nodes. In \cite{Topal_wavelet}, the authors utilize a wavelet-based pre-processing to eliminate the dissimilarities of the channel observations.  Besides pre-processing after obtaning raw secret keys, information reconciliation has been applied to eliminate the discrepancies in the generated keys. In \cite{reccompare}, the authors provide a general overview on the reconciliation performance of the error correcting codes. More recently, the authors of \cite{nature_blind} provide a more efficient key reconciliation protocol. In \cite{kurt2020polynomial}, the authors propose a polynomial interpolation based information reconciliation scheme. All of the reconciliation schemes may utilized in different secret key generation schemes including our work to eliminate remaining erroneous bits.

\section{System Model}
In this paper, we consider point to point communication link between two spacecrafts. As envisioned in the future space missions, these spacecrafts might be satellites orbiting around Earth, Moon or Mars, or space shuttles carrying {\color{black}humans} or goods. The spacecrafts for these missions are expected to have very high mobility that effects the communication links severely.{ \color{black} The spacecrafts and Doppler frequency shift utilization is given in Figure \ref{fig:spacecraft}.} Alice and Bob are two distant spacecrafts that {\color{black} want} to communicate with secrecy. In the meantime, we assume a third spacecraft, Eve, eavesdrops the communication between Alice and Bob to obtain the same secret of Alice and Bob. We assume that all spacecrafts have single omni-directional antenna for message transmission and reception. Alice and Bob are assumed to communicate in time division duplexing (TDD) fashion. Alice and Bob would require identical secret keys to encrypt/decrypt their communication messages, where Eve should not be able to obtain this secret key. Due to the high computational requirements, commonly utilized secret sharing mechanisms such as Diffie-Hellman is assumed to be not available in the considered scenario \cite{post_quantum}. {\color{black} As depicted in Figure \ref{fig:Doppler_utilization}, estimating the Doppler frequency shift is also an integral part for enabling reliable communication and navigation services in space \cite{NASA_dopp,NASA_nav}.  The spacecrafts already utilize Doppler frequency shifts to compansate the carrier frequency and navigate the spacecraft. Since pilot sharing and Doppler frequency measurements are already an integral part of the system, the additional cost of complexity is introduced by the quantization operation. This complexity is quite low in comparison to the many operations utilized by the transceivers such as channel estimation and equalization \cite{hanzo}.}  At least one of the legitimate nodes is assumed to have high mobility. In the following, we detail the mobility model for the spacecrafts, and provide the fundamental resource of the secrecy for the communication between Alice and Bob.

\begin{figure}[t]
\centering
			\subfigure[]{
				\label{fig:doppler_system}			\includegraphics[width=\linewidth]{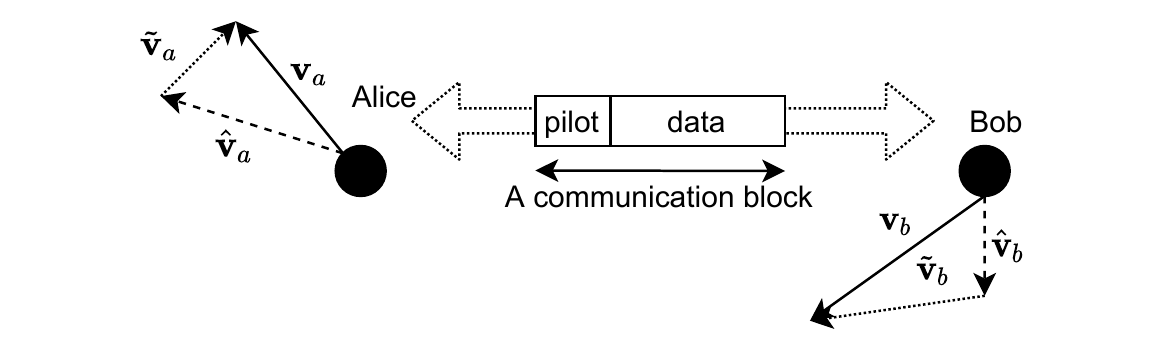} } \\
	
			\subfigure[]{
				\label{fig:Doppler_utilization}
			\includegraphics[width=\linewidth]{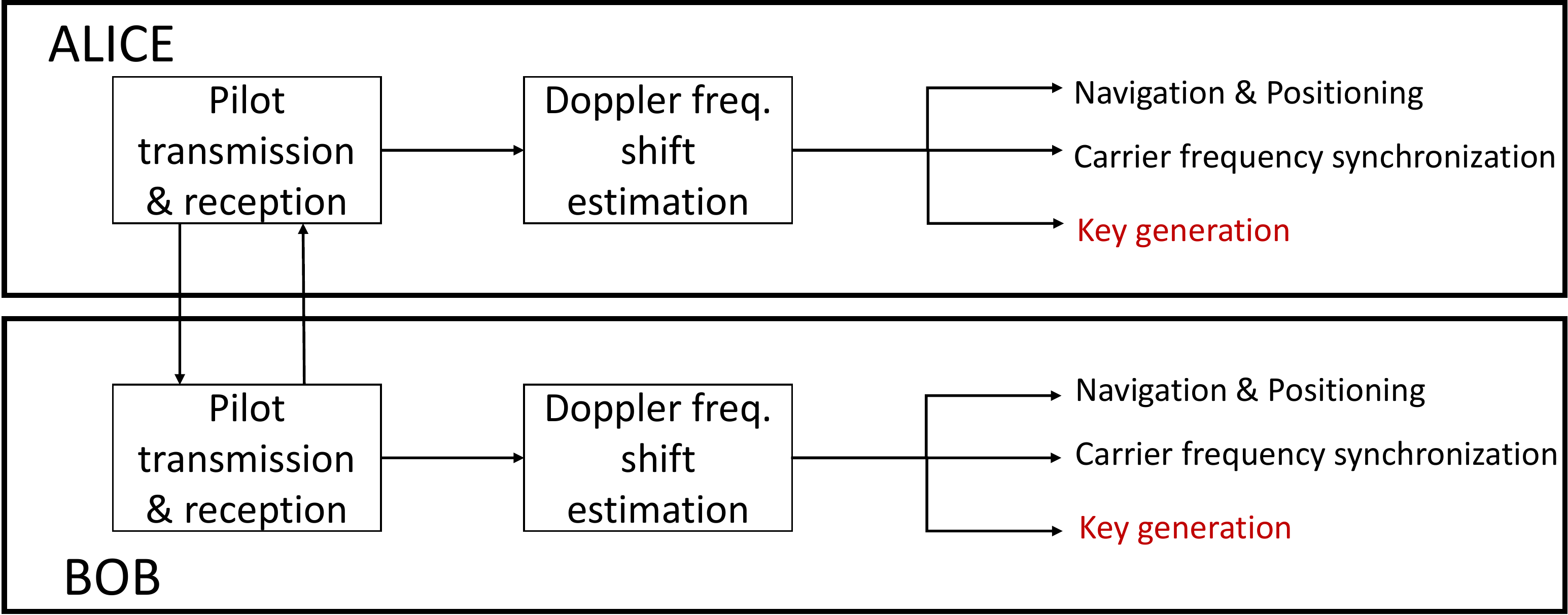} }
	\caption{\color{black}(a) A visiual representation for the relative velocity and position vectors, (b) Utilization of Doppler frequency shifts in spacecrafts.}
	\label{fig:spacecraft}
\end{figure}

\subsection{Spacecraft Mobility Model}
In the considered system model, the mobility of Alice and Bob are assumed to be composed of a static, expected movement component and a random, unexpected movement component. A communication block is defined for the time instant $t$, where Alice and Bob respectively share their messages respectively for each time instant. The velocities and the positions of the spacecrafts are updated for each time instant. 

Let us denote the velocity vector of the $k^{\text{th}}$ spacecraft at time instant $t$ in Cartesian coordinates by $\mathbf{v}_k(t)=[v^x_k,v^y_k,v^z_k]^T$, where $k\in\{a,b,e\}$ for respectively Alice, Bob and Eve. Similarly, the position vector of the $k^{\text{th}}$ spacecraft at time instant $t$ in Cartesian coordinates is denoted by $\mathbf{p}_k(t)=[p^x_k,p^y_k,p^z_k]^T$. In line with the mission, the spacecrafts are assumed to have a predetermined route and velocity values that known from all other spacecrafts. In the meantime, due to the dynamic environmental changes, the spacecrafts have small divergences from the aimed route and velocity values. We model the total amount of displacement and velocity as the superposition of these two effects: 
\begin{equation}
\mathbf{v}_k(t)= \mathbf{\hat{v}}_k(t)+\mathbf{\tilde{v}}_k(t),
\end{equation}
where $\mathbf{\hat{v}}_k(t)$ denotes the velocity of the main route that publicly known,  $\mathbf{\tilde{v}}_k(t)$ denotes the velocity introduced by the displacement due to the dynamic environmental changes. For example, for a LEO satellite orbiting {with a \color{black} static route, we can assume $\mathbf{\tilde{v}}_k(t)=0$.} However, for a space shuttle the dynamic environmental would cause higher divergences that result temporal discrepancies in the main route. {\color{black} Each spacecraft drafts from its expected trajectory due to numerous unpredictable reasons \cite{NASA_nav}. For example, the pressure of sunlight may catapult to a drift from the main trajectory. After drifting, the spacecraft starts to follow an arbitrary and random trajectory that the mobility controllers need to estimate and maneuver the spacecraft back to its main route. This random displacement is modeled in the manuscript by the $\tilde{v}_k$. Due to the unpredictability and the real-world randomness, even the space agencies consider these displacements as random behaviors. In this paper, we model  these  discrepancies in the velocity with a stochastic process, where the components in each Cartesian coordinate is \textit{i.i.d}, $\mathbf{\tilde{v}}_k(t)=[{\tilde{v}}^x_k,{\tilde{{v}}}^y_k,{\tilde{v}}^z_k]^T$.}

As discussed in above, due to the static channel environment in deep space communication, the legitimate parties may not utilize the channel fading randomness based secrecy sources to obtain an identical secret keys. The following subsection light the way for a novel secrecy source for distant communication parties in this particular challenging environment.

\subsection{Doppler Frequency Shift as a Secrecy Source}

Let $f_{\textcolor{black}{m}k}$ denote the nominal Doppler frequency, where $f_{\textcolor{black}{m}k}=\frac{c \omega_{\textcolor{black}{m}k}}{f_c}$, where $\omega_{\textcolor{black}{m}k}$ is the Doppler frequency observed at the $k^{\text{th}}$ spacecraft of the transmitted signal from the $\textcolor{black}{m}^{\text{th}}$ spacecraft, $c$ is the speed of light in m/s, and $f_c$ is the carrier frequency of the transmitted signal.  The nominal Doppler frequency at time instant $t$ is given by
\begin{equation}
\begin{aligned}
f_{k\textcolor{black}{m}}(t)&=\frac{\mathbf{v^T_{\textcolor{black}{m}}}(t)(\mathbf{p_{\textcolor{black}{m}}}(t)-\mathbf{p_{k}}(t))}{||\mathbf{p_{\textcolor{black}{m}}}(t)-\mathbf{p_{k}}(t)||}+\frac{\mathbf{v_{k}^T}(t)(\mathbf{p_{k}}(t)-\mathbf{p_{\textcolor{black}{m}}}(t))}{||\mathbf{p_{k}}(t)-\mathbf{p_{\textcolor{black}{m}}}(t)||}, \\ &=\frac{(\mathbf{v_{\textcolor{black}{m}}}(t)-\mathbf{v_{k}}(t))^T(\mathbf{p_{\textcolor{black}{m}}}(t)-\mathbf{p_{k}}(t))}{||\mathbf{p_{\textcolor{black}{m}}}(t)-\mathbf{p_k}(t)||}.
\end{aligned}
\end{equation}

The difference of velocity and position vectors provide us the relative velocity of the spacecrafts. The following observation is the cornerstone for the secrecy harnessing from the Doppler frequency shift.

\noindent\textbf{Observation 1.}\textit{ The Doppler frequency of the reciprocal links at a certain time instant can be uniquely described with $\omega_{k\textcolor{black}{m}}=-\omega_{\textcolor{black}{m}k}$, while the Doppler frequency of received signal from any other physically disjoint node $t$ will be different than other ISL's $\omega_{k\textcolor{black}{m}} \neq \omega_{k\textcolor{black}{n}} \neq \omega_{\textcolor{black}{m}\textcolor{black}{n}}$.} 

This observation guarantees that two distant spacecraft, Alice and Bob may obtain identical Doppler frequency observations resulted from their high mobilities. In the meantime, Eve also observe a Doppler frequency shift in her versions of the signal. As stated in the observation, since the relative velocity and position of Eve with Alice and Bob would be different than the relative speed of Alice and Bob, her Doppler frequency shift observation would be different than Alice and Bob. 

However, in order to utilize Doppler frequency shift as a secrecy source, two important properties must be satisfied from the measurements:
\begin{enumerate}
\item The measurement of Eve should be as uncorrelated as possible  with the measurements at Alice and Bob. In other words, assuming Eve knows the secret key generation procedure, the generated key at her side must be different than Alice's and Bob's.

\item In order to obtain continuous secrecy, Alice and Bob would require update the generated key. Also, this generated key should have a  high amount of entropy to be utilized in symmetric key encryption. Therefore, Doppler frequency measurements should embody a random characteristics over a long time period.
\end{enumerate}
With the following proposition and the Observation 1, we guarantee the first given condition. The second condition is discussed in the following subsection.
\begin{proposition}
	For the nodes with very high mobility, Doppler frequency shift is a shared secret between reciprocal nodes, while any other eavesdropping node cannot retrieve the mobility information of the legitimate nodes. 
	\label{prop:not_eavesdrop} 
\end{proposition}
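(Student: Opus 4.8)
The plan is to separate the statement into its two halves: that $f_{ab}(t)=f_{ba}(t)$, so that Alice and Bob share a common quantity, and that no reconstruction of this quantity is available to Eve, so that it stays secret. For the first half I would work directly from the nominal-Doppler expression above, setting $k\to a$, $m\to b$. Since $(\mathbf{v}_b-\mathbf{v}_a)^{T}(\mathbf{p}_b-\mathbf{p}_a)=(\mathbf{v}_a-\mathbf{v}_b)^{T}(\mathbf{p}_a-\mathbf{p}_b)$ and $\|\mathbf{p}_b-\mathbf{p}_a\|=\|\mathbf{p}_a-\mathbf{p}_b\|$, the two readings are algebraically the same function of the joint kinematics of the legitimate pair; this is precisely the reciprocity $\omega_{ab}=-\omega_{ba}$ recorded in Observation 1, the sign being a fixed convention both nodes apply identically. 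Hence Alice and Bob hold an identical value that depends only on $\mathbf{v}_a,\mathbf{v}_b,\mathbf{p}_a,\mathbf{p}_b$.

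For the second half I would argue by counting degrees of freedom together with non-invertibility of the relevant map. Decompose each vector as in Section III into a publicly known component and a dynamic component, e.g. $\mathbf{v}_k=\hat{\mathbf{v}}_k+\tilde{\mathbf{v}}_k$ and $\mathbf{p}_k=\hat{\mathbf{p}}_k+\tilde{\mathbf{p}}_k$. After subtracting the known terms, Eve's two observables $f_{ae}(t),f_{be}(t)$ are functionals of the unknown $\tilde{\mathbf{v}}_a(t),\tilde{\mathbf{v}}_b(t)\in\mathbb{R}^{3}$ and of the induced random displacements, so at every time instant Eve faces two scalar equations in at least six fresh real unknowns. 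Because $\tilde{\mathbf{v}}_a$ and $\tilde{\mathbf{v}}_b$ are modeled as mutually independent processes with independent increments (the normalized Brownian motion of Section III), processing the whole history of measurements does not close this gap: each new instant adds two observations and at least six new unknowns. Therefore the map from the legitimate dynamic kinematics to $(f_{ae},f_{be})$ is not injective, and, using Observation 1 to exclude the degenerate collinear geometry in which $\omega_{ae}$, $\omega_{be}$, $\omega_{ab}$ would become functionally dependent, the pre-image of any value Eve observes has positive measure. Consequently the conditional law of $f_{ab}(t)$ given everything Eve can measure is non-degenerate, i.e. Eve cannot retrieve the mobility information of Alice and Bob, which is the claim.

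The step I expect to be the main obstacle is this second half: ruling out \emph{every} estimator that Eve might build from $(f_{ae},f_{be})$ and their histories, rather than just the obvious ones. The delicate points are (i) excluding the degenerate spatial configurations, exactly where Observation 1's assertion $\omega_{ae}\neq\omega_{be}\neq\omega_{ab}$ is invoked, and (ii) showing that the temporal dimension gives Eve no leverage, which rests on the independent-increment structure of $\tilde{\mathbf{v}}_k$. A clean way to make ``cannot retrieve'' quantitative would be to upper-bound the mutual information $I\!\left(f_{ab}(t);\{f_{ae}(s),f_{be}(s)\}_{s\le t}\right)$ strictly below the differential entropy $h(f_{ab}(t))$ under the assumed Gaussian/Brownian model, which becomes a short computation once the non-invertibility is in place.
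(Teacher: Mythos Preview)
Your proposal follows essentially the same route as the paper: the reciprocity half is dispatched (in the paper, implicitly by appeal to Observation~1) via the sign symmetry of the inner-product expression, and the secrecy half is a degree-of-freedom / underdetermination argument. The paper's version is more concrete but narrower: it writes Eve's two observations $f_{ae},f_{be}$ as a single $2\times 12$ linear system in the stacked unknown $[\mathbf{v}_a;\mathbf{p}_a;\mathbf{v}_b;\mathbf{p}_b]$ (after pushing the normalization and Eve's own kinematics to the right-hand side) and notes that this system is not invertible without several additional equations. Your treatment is more careful in two respects the paper omits entirely: you address the temporal dimension, using the independent-increment structure of $\tilde{\mathbf{v}}_k$ to argue that accumulating history never closes the equation-versus-unknown gap, and you propose to make ``cannot retrieve'' quantitative via a mutual-information bound. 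Both additions are sound and genuinely strengthen the argument, but they go beyond what the paper actually proves; the paper is content with single-instant underdetermination and does not touch histories, degenerate geometries, or information-theoretic quantities.
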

\begin{proof}
Let us denote the nominal Doppler frequency measurements at an arbitrary time instant at Alice and Bob respectively with $f_{ba}$ and $f_{ab}$. The measured Doppler frequency shifts at Eve would be denoted by $f_{be}$ and $f_{ae}$. Considering the definition of the velocity, 
\begin{equation}
\left((\mathbf{\hat{v}_{b}^T}-\mathbf{\hat{v}_{a}^T})+(\mathbf{\tilde{v}_{b}^T}-\mathbf{\tilde{v}_{a}^T})\right)\frac{(\mathbf{p_b}-\mathbf{p_{a}})}{||(\mathbf{p_b}-\mathbf{p_{a}})||}=f_{ba}.
\end{equation}
In order to obtain $f_{ba}$ or $f_{ab}$, Eve should obtain the vectors $\mathbf{\hat{v}_{a}}$, $\mathbf{\hat{v}_{b}}$, $\mathbf{p_{a}}$ and $\mathbf{p_{b}}$. The observations at Eve are consist of $f_{ae}$, $f_{be}$, $\mathbf{\hat{v}_{b}}$ and $\mathbf{\hat{v}_{a}}$. Let us rewrite the Doppler frequency equations at Eve in a set of linear equations form:
\begin{equation}
\begin{bmatrix}
\mathbf{p}^T_a-\mathbf{p}^T_e& -\mathbf{v}^T_e & 0 & 0\\ 
0& 0 & \mathbf{p}^T_b-\mathbf{p}^T_e & -\mathbf{v}^T_e
\end{bmatrix} \begin{bmatrix}
\mathbf{v}_a\\ 
\mathbf{p}_a\\ 
\mathbf{v}_b\\ 
\mathbf{p}_b
\end{bmatrix}= \begin{bmatrix}
f_{ae}-\mathbf{v}^T_e\mathbf{p}_e\\ 
f_{be}-\mathbf{v}^T_e\mathbf{p}_e
\end{bmatrix}.
\end{equation}
In order to capture the variable vector  $[\mathbf{v}_a, \mathbf{p}_a, \mathbf{v}_b , \mathbf{p}_b]^T$, Eve needs to find the inverse of the \textcolor{black}{ left part of the matrix}. In order to solve the equation Eve needs 6 more equations to find the inverse of the matrix given in the above. Therefore, Eve cannot retrieve the mobility information of Alice and Bob and cannot obtain their Doppler frequency measurements. 
\end{proof}

Note that, we do not consider the side channel attacks for the proposed scenario. Since the mobility of the nodes may obtain in different ways, Eve may try to utilize side channel attacks. For example, by utilizing multiple antenna receiver structure, Eve may estimate the position and the  velocity of the spacecrafts.  Moreover, by utilizing a radar system, she may track the legitimate nodes positions and velocities.  However, a radar equipped Eve would be limited by the range of the radar, the number of existing mobile nodes in the range, and the stealth technology capabilities of Alice and Bob. Based on the relative positions and velocities of Alice and Bob, they determine the unambiguous radar range for the eavesdropper as outlined in \cite{FOURIKIS20001}. If Alice and Bob are inside the radar range of Eve, than other existing mobile nodes would force Eve to try $\binom{n_s}{2}$ combinations for a single Doppler measurement, where $n_s$ denotes the number of mobile nodes in the radar range of Eve. By selecting the number of quantization intervals lower than $\binom{n_s}{2}$, brute force attack becomes easier than the radar attack. For example if there is at least one mobile node existing, they may continue their secret key generation with a two level quantization process. If there are no nodes other than Alice and Bob in the radar range, than they may utilize smart or cognitive jamming \cite{topal2020identification} or artificial noise methodologies to confuse the eavesdropper without effecting their own radar measurements \cite{magazin}. In the following, we discuss how the proposed mechanism satisfies the second enumerated condition for the secret key generation. 

\subsection{Maximum Achievable Secret Key Rate from Doppler Frequency}
The superposition of the velocities can also be observed in the Doppler frequency shifts as $\omega_{k\textcolor{black}{m}}(t)=\hat{\omega}_{k\textcolor{black}{m}}(t)+\tilde{\omega}_{k\textcolor{black}{m}}(t)$, where $\hat{\omega}_{k\textcolor{black}{m}}(t)=\frac{f_c}{c}\frac{(\mathbf{\hat{v}_{\textcolor{black}{m}}}(t)-\mathbf{\hat{v}_{k}}(t))^T(\mathbf{p_{\textcolor{black}{m}}}(t)-\mathbf{p_{k}}(t))}{||\mathbf{p_{\textcolor{black}{m}}}(t)-\mathbf{p_k}(t)||}$, and $\tilde{\omega}_{k\textcolor{black}{m}}(t)=\frac{f_c}{c}\frac{(\mathbf{\tilde{v}_{\textcolor{black}{m}}}(t)-\mathbf{\tilde{v}_{k}}(t))^T(\mathbf{p_{\textcolor{black}{m}}}(t)-\mathbf{p_{k}}(t))}{||\mathbf{p_{\textcolor{black}{m}}}(t)-\mathbf{p_k}(t)||}$. Note that, we model $\hat{\omega}_{k\textcolor{black}{m}}(t)$ as a deterministic constant, and $\tilde{\omega}_{k{\textcolor{black}{m}}}(t)$ as a random i.i.d. process. In the following, we show that, the deterministic part of the Doppler frequency shift actually do not contribute to the secret key randomness, whereas the random part determines the maximum achievable secret key rate. 
\begin{proposition}
The achievable secret key rate when the Doppler frequency shift is utilized as a secrecy source is equal to $R_k=\log({f_c}/{c})+h(\tilde{f}_{ab})$. 
\label{prop:rate}
\end{proposition}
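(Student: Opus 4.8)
The plan is to invoke the classical source-model characterization of secret key capacity and then reduce it to a differential-entropy calculation. In the source model of Maurer and of Ahlswede--Csisz\'ar, when Alice, Bob, and Eve observe correlated quantities $X$, $Y$, $Z$ drawn i.i.d.\ across communication blocks, the secret key rate is $I(X;Y\mid Z)$, which collapses to $I(X;Y)$ whenever $Z$ is useless for predicting the pair $(X,Y)$. Here the per-block observations are the reciprocal Doppler measurements $\omega_{ba}$ at Alice and $\omega_{ab}$ at Bob, while Eve holds $(\omega_{ae},\omega_{be})$ together with the public nominal velocities. By Observation~1, $\omega_{ba}=-\omega_{ab}$, so Alice and Bob possess deterministically equivalent copies of the same real-valued quantity; treating the shared Doppler value as the common randomness, the extractable rate equals its (conditional) differential entropy $h(\omega_{ab}\mid Z)$, with achievability supplied by the standard Slepian--Wolf-reconciliation-plus-privacy-amplification construction (instantiated here by the quantizer) and the converse by the same theorem.

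Next I would remove the conditioning on Eve using Proposition~\ref{prop:not_eavesdrop}. There it is shown that the linear map carrying Eve's observations to $(\mathbf v_a,\mathbf p_a,\mathbf v_b,\mathbf p_b)$ is rank-deficient by six, so the set of mobility states consistent with what Eve sees is a nondegenerate affine subspace on which the fluctuations $\tilde{\mathbf v}_a,\tilde{\mathbf v}_b$ — and hence $\omega_{ab}$ — still vary over their full support. Consequently Eve's side information does not sharpen her belief about $\omega_{ab}$, i.e.\ $h(\omega_{ab}\mid Z)=h(\omega_{ab})$, and $R_k=h(\omega_{ab})$. Upgrading ``Eve cannot solve the system'' to ``Eve's posterior on $\omega_{ab}$ equals its prior'' is the step I expect to require the most care; I would make it precise by exhibiting the six free directions of the solution space as exactly the degrees of freedom of $(\tilde{\mathbf v}_a,\tilde{\mathbf v}_b)$ that feed into $\omega_{ab}$.

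Finally I would compute $h(\omega_{ab})$ using the decomposition $\omega_{ab}=\hat\omega_{ab}+\tilde\omega_{ab}$ from the text. Since $\hat\omega_{ab}$ is a deterministic, publicly known constant, translation invariance of differential entropy gives $h(\omega_{ab})=h(\tilde\omega_{ab})$. From $\tilde\omega_{ab}=\tfrac{f_c}{c}\,\tilde f_{ab}$ and the scaling identity $h(\alpha X)=\log|\alpha|+h(X)$ for a positive scalar $\alpha$, we obtain $h(\tilde\omega_{ab})=\log(f_c/c)+h(\tilde f_{ab})$. Chaining the equalities yields $R_k=\log(f_c/c)+h(\tilde f_{ab})$, as claimed.

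The main obstacle is making the continuous common source rigorous: for noiseless identical real-valued observations the raw mutual information is infinite, so $R_k$ must be understood relative to a fixed measurement/quantization resolution $\Delta$ — the achievable key length then behaves like $h(\tilde f_{ab})+\log(f_c/c)-\log\Delta+o(1)$ and $R_k$ is read off as the resolution-independent part. I would also append the remark that justifies the word ``maximum'' in the section title: among fluctuation processes $\tilde{\mathbf v}_k$ of a prescribed covariance, the normalized Brownian-motion model renders $\tilde f_{ab}$ Gaussian, which maximizes $h(\tilde f_{ab})$ and hence $R_k$.
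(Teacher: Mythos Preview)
Your argument is correct and its computational core --- identify $R_k$ with $h(\omega_{ab})$, drop the deterministic shift $\hat\omega_{ab}$ by translation invariance of differential entropy, then apply the scaling rule $h(\alpha X)=\log|\alpha|+h(X)$ with $\alpha=f_c/c$ --- is exactly the paper's proof. The paper is simply terser: it asserts $R_k=h(\omega_{ab})$ by citing that the key rate is bounded by the entropy of the randomness source, and it omits the Maurer/Ahlswede--Csisz\'ar scaffolding, the explicit removal of Eve's conditioning via Proposition~\ref{prop:not_eavesdrop}, and the continuous-common-randomness caveat you raise; your closing Brownian-motion remark is the content of the paper's \emph{next} proposition rather than this one.
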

\begin{proof}
The achievable secret key rate is limited by the entropy of the randomness source as given in \cite{elements}. In our case the achievable secret key rate is given by
\begin{equation}
R_k={h}(\omega_{ab})= {h}(\omega_{ba}).
\end{equation}
Let us rewrite the entropy of the secrecy source as 
\begin{equation}
{h}(\omega_{ab})= {h}(\hat{\omega}_{ab}+\tilde{\omega}_{ab}).
\end{equation}
As given in \cite{elements}, since the differential entropy is translation invariant, the constant term does not introduce additional entropy. Hence ${h}(\omega_{ab})= h(\tilde{\omega}_{ab})$. Also considering $\tilde{\omega}_{ab}=\frac{f_c \tilde{f}_{ab}}{c}$, the maximum achievable key rate is given by
\begin{equation}
R_k=\log({f_c}/{c})+h(\tilde{f}_{ab}).
\end{equation}

\end{proof}

\begin{proposition}
The achievable secret key rate is maximized when the mobility of the legitimate nodes follow the Brownian motion as  $\{(p_{a}(t+1)-p_a(t))\sim \mathcal{N}(0,\sigma_{v_a}^2T)$ and $(p_{b}(t+1)-p_b(t))\}\sim \mathcal{N}(0,\sigma_{v_b}^2T)$.
\label{prop:Brownian}
\end{proposition}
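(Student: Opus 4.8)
The plan is to reduce the statement to the classical maximum-entropy characterization of the Gaussian law. By Proposition~\ref{prop:rate} the achievable key rate is $R_k=\log(f_c/c)+h(\tilde f_{ab})$, so it suffices to choose the random mobility component so as to maximize $h(\tilde f_{ab})$. First I would rewrite $\tilde f_{ab}$ in terms of the line-of-sight direction: with $\mathbf{u}=(\mathbf{p}_b-\mathbf{p}_a)/\|\mathbf{p}_b-\mathbf{p}_a\|$ a unit vector, the definition of $\tilde\omega_{ab}$ gives $\tilde f_{ab}=(\tilde{\mathbf v}_b-\tilde{\mathbf v}_a)^T\mathbf{u}$, i.e.\ $\tilde f_{ab}$ is the projection of the relative random velocity onto a unit vector. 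Because the coordinates of $\tilde{\mathbf v}_a$ and $\tilde{\mathbf v}_b$ are i.i.d.\ across coordinates and the two vectors are independent with per-coordinate variances $\sigma_{v_a}^2$ and $\sigma_{v_b}^2$, the conditional variance of this projection given $\mathbf{u}$ equals $\mathbf{u}^T(\sigma_{v_a}^2 I+\sigma_{v_b}^2 I)\mathbf{u}=\sigma_{v_a}^2+\sigma_{v_b}^2$, \emph{independently of the direction $\mathbf{u}$ and of the shape of the perturbation distribution}. Hence the second moment of $\tilde f_{ab}$ is pinned to $\sigma_{v_a}^2+\sigma_{v_b}^2$ by the physical budget on the admissible drift, and only the distributional shape is free.

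Next I would invoke the standard result \cite{elements} that among all real random variables with a prescribed second moment the differential entropy is maximized, uniquely, by the zero-mean Gaussian, giving $h(\tilde f_{ab})\le\frac12\log\!\big(2\pi e(\sigma_{v_a}^2+\sigma_{v_b}^2)\big)$. Equality holds exactly when $\tilde f_{ab}$ is Gaussian; since $\tilde f_{ab}$ is a fixed linear functional of $(\tilde{\mathbf v}_a,\tilde{\mathbf v}_b)$, a sufficient condition is that $\tilde{\mathbf v}_a$ and $\tilde{\mathbf v}_b$ be jointly Gaussian, and, because this must hold for every line-of-sight direction $\mathbf{u}$, the Cram\'er--Wold device together with Cram\'er's decomposition theorem makes the condition necessary as well. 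Finally I would translate this back to the position process: the displacement accumulated over a block of duration $T$ is a sum of independent velocity increments, so it is Gaussian and its variance grows linearly in $T$; matching the budget yields $p_a(t+1)-p_a(t)\sim\mathcal N(0,\sigma_{v_a}^2 T)$ and $p_b(t+1)-p_b(t)\sim\mathcal N(0,\sigma_{v_b}^2 T)$ with independent increments, which is precisely a (scaled) Brownian motion, and substituting the maximal $h(\tilde f_{ab})$ into Proposition~\ref{prop:rate} gives the claimed maximizer.

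The step I expect to require the most care is the conditioning argument around the line-of-sight vector $\mathbf{u}(t)$: since positions are themselves built from past perturbation increments, one must argue that $\mathbf{u}(t)$ is independent of the current increment $\tilde{\mathbf v}_k(t)$ (which follows from the temporal i.i.d.\ assumption on $\tilde{\mathbf v}_k$), so that the max-entropy bound may be applied after conditioning on $\mathbf{u}(t)$ and then averaged without loss. A secondary point worth making explicit is that the optimization is vacuous without a finite variance constraint, so the proof must state that the admissible drift has bounded energy $\sigma_{v_k}^2$; it is exactly this constraint that makes the Gaussian the genuine maximizer rather than the entropy being unbounded.
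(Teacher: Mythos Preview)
Your proposal is correct and follows essentially the same route as the paper: reduce to maximizing $h(\tilde f_{ab})$ via Proposition~\ref{prop:rate}, write $\tilde f_{ab}$ as a projection of the relative random velocity onto the unit line-of-sight vector, note that the direction cosines contribute a factor $\rho_x^2+\rho_y^2+\rho_z^2=1$ so the variance is $\sigma_{v_a}^2+\sigma_{v_b}^2$, invoke the Gaussian max-entropy result, and translate Gaussian velocity increments into Brownian displacement over a block of length $T$. Your version is in fact more careful than the paper's, which only argues sufficiency (``by assuming each velocity component is Gaussian\ldots''); your Cram\'er--Wold/Cram\'er-decomposition step for necessity, the conditioning argument on $\mathbf{u}(t)$, and the explicit remark that a finite variance budget is what makes the optimization well-posed are all points the paper leaves implicit.
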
 
\begin{proof}
Let us rephrase $R_k=\log({f_c}/{c})+h(\tilde{f}_{ab})$. 
Since the carrier frequency, speed of light are constants, the achievable secret key rate is maximized in line with the distribution of the nominal Doppler frequency. Since a Gaussian random variable has the largest entropy amongst all random variables of equal variance \cite{elements}, the maximum achievable secret key rate is obtained when $\tilde{f}_{ab}\sim \mathcal{N}(0,\sigma^2)$. $\tilde{f}_{ab}$ can be rephrased by 
\begin{equation}
\tilde{f}_{ab}= \rho_x(\tilde{v}^x_b-\tilde{v}^x_a)+\rho_y(\tilde{v}^y_b-\tilde{v}^y_a)+\rho_z(\tilde{v}^z_b-\tilde{v}^z_a),
\end{equation} 
where $\rho_x= \frac{p^x_b-p^x_a}{||p_b-p_a||}$, $\rho_y= \frac{p^y_b-p^y_a}{||p_b-p_a||}$, and $\rho_z= \frac{p^z_b-p^z_a}{||p_b-p_a||}$. By assuming each velocity component is Gaussian distributed, we can get the nominal Doppler frequency also Gaussian distributed, where $v^x_a, v^y_a, v^z_a\sim \mathcal{N}(0, \sigma_{v_a}^2)$, and $v^x_b, v^y_b, v^z_b \sim \mathcal{N}(0, \sigma_{v_b}^2)$.  The variance of the nominal Doppler frequency then becomes 
\begin{equation}
\sigma^2_d= (\rho^2_x+\rho^2_y+\rho^2_z)(\sigma_{v_a}^2+\sigma_{v_b}^2)= \sigma_{v_a}^2+\sigma_{v_b}^2.
\end{equation}
Since the velocity is constant in the communication period, the displacement vector of Alice and Bob would be modeled as $\{(p_{a}(t+1)-p_a(t))\sim \mathcal{N}(0,\sigma_{v_a}^2T)$ and $(p_{b}(t+1)-p_b(t))\}\sim \mathcal{N}(0,\sigma_{v_b}^2T)$, which is the definition of the Brownian motion \cite{elements}. 
\end{proof}

\begin{corollary}
The maximum achievable secret key rate can be given by
\begin{equation}
R_k^*=\frac{1}{2}\log(2\pi e \sigma^2_d)+\log(f_c/c). 
\end{equation}
\end{corollary}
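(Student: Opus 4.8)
The plan is to obtain this corollary as an immediate consequence of Propositions~\ref{prop:rate} and~\ref{prop:Brownian} combined with the closed-form expression for the differential entropy of a Gaussian random variable. First I would recall from Proposition~\ref{prop:rate} that the achievable secret key rate admits the decomposition $R_k=\log(f_c/c)+h(\tilde{f}_{ab})$, so that maximizing $R_k$ over admissible mobility statistics is equivalent to maximizing the differential entropy $h(\tilde{f}_{ab})$ of the random part of the nominal Doppler frequency.

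Next I would invoke Proposition~\ref{prop:Brownian}, which establishes that, under the variance constraint induced by the i.i.d.\ velocity-perturbation model, $h(\tilde{f}_{ab})$ is maximized when $\tilde{f}_{ab}\sim\mathcal{N}(0,\sigma^2_d)$ with $\sigma^2_d=\sigma_{v_a}^2+\sigma_{v_b}^2$ (equivalently, when the legitimate trajectories follow the normalized Brownian motion). Substituting the standard identity $h(\mathcal{N}(0,\sigma^2_d))=\tfrac12\log(2\pi e\,\sigma^2_d)$ from \cite{elements} into the decomposition above yields
\begin{equation}
R_k^*=\frac{1}{2}\log(2\pi e \sigma^2_d)+\log(f_c/c),
\end{equation}
which is the claimed expression.

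Since every ingredient is already in place, there is no genuine obstacle here; the only points requiring a little care are bookkeeping ones: confirming that the variance appearing in the Gaussian entropy formula is exactly the $\sigma^2_d$ computed in the proof of Proposition~\ref{prop:Brownian} (so that the projection coefficients $\rho_x,\rho_y,\rho_z$ with $\rho_x^2+\rho_y^2+\rho_z^2=1$ have already been absorbed), and verifying that the additive constant $\log(f_c/c)$ passes through unchanged by translation/scaling invariance of differential entropy. With those checks the corollary follows in one line.
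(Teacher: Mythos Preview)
Your proposal is correct and follows essentially the same route as the paper's own proof: both combine Proposition~\ref{prop:rate} with Proposition~\ref{prop:Brownian} and then substitute the Gaussian differential-entropy formula $h(\mathcal{N}(0,\sigma_d^2))=\tfrac12\log(2\pi e\,\sigma_d^2)$ to obtain the stated expression. Your additional bookkeeping remarks (about $\rho_x^2+\rho_y^2+\rho_z^2=1$ and the invariance of the additive constant) are consistent with what the paper already absorbs into Proposition~\ref{prop:Brownian}.
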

\begin{proof}
This is a direct result of Proposition 2 and Proposition 3, where $$R^*_k=\log({f_c}/{c})+\max\{h(\tilde{f}_{ab})\},$$ and the maximization is obtained for $\tilde{f}_{ab}\sim \mathcal{N}(0,\sigma^2)$. Since $h(\tilde{f}_{ab})=\frac{1}{2}\log(2\pi e \sigma^2_d)$, the maximum achievable secret key rate becomes $$R_k^*=\frac{1}{2}\log(2\pi e \sigma^2_d)+\log(f_c/c).$$
\end{proof}

In the following, we present the proposed secret key generation procedure to extract secret keys from the mobility of the spacecrafts. 
\section{Secret Key Generation Procedure}
\begin{figure}[t]
	\centering
	\includegraphics[width=\linewidth]{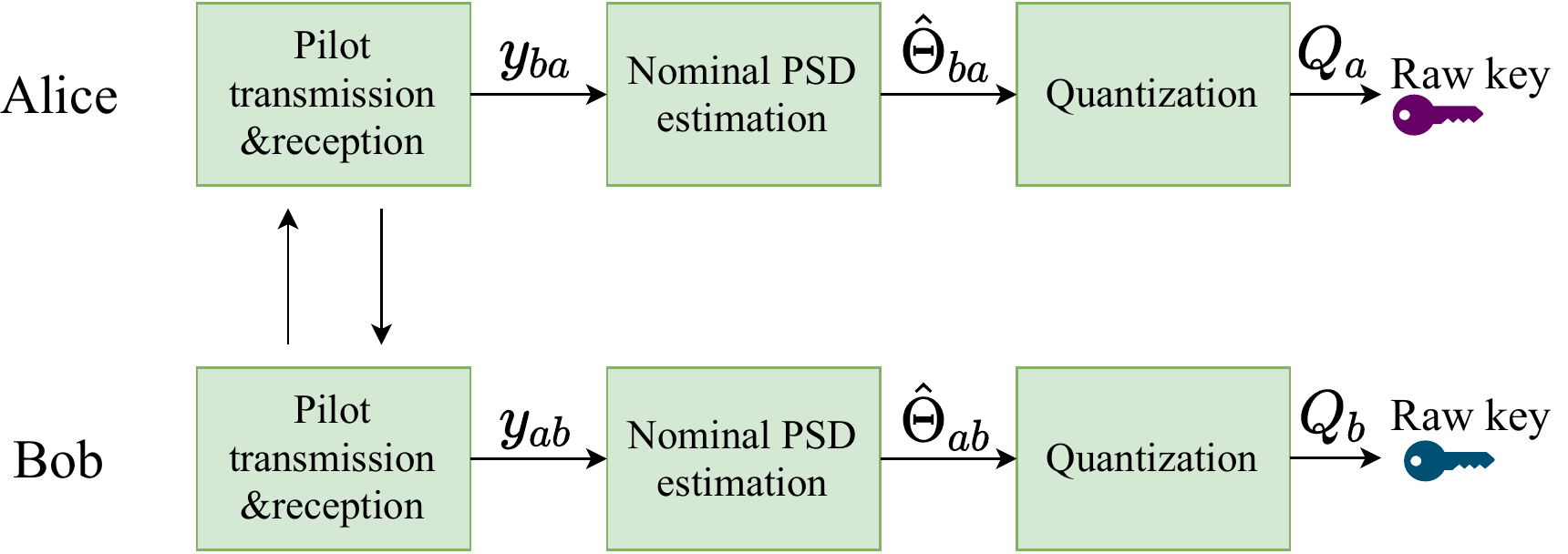}
	\caption{An illustration of secret key generation procedure.}
	\label{fig:system_model}
\end{figure}
\subsection{Pilot Transmission \& Reception}
The secret key generation procedure is described in the Figure \ref{fig:system_model}. As a first step, Alice and Bob respectively transmit $N$ pilot symbols $\mathbf{x}= \left[x(1), x(2), \ldots, x(N) \right]$ in a time division duplexing (TDD) fashion, {\color{black} where $x(i)=a_ie^{j\phi_i}$. In the following of this paper, we consider BPSK modulation for simulations and in Proposition 5. In this case, $a_i=1$, $\phi_i\in\{-\pi,\pi\}$.} The considered {\color{black}spacecraft}s Alice, Bob and Eve are respectively be denoted by $\{a,b,e\}$.   Considering the ISLs, the baseband representation of the received symbols at the {\color{black}spacecraft} $k$ from the {\color{black}spacecraft} $\textcolor{black}{m}$, $\mathbf{y_{\textcolor{black}{m}k}}=\left[y_{\textcolor{black}{m}k}(1), y_{\textcolor{black}{m}k}(2), \ldots, y_{\textcolor{black}{m}k}(N) \right]$ can be modeled as \
\begin{equation}
 y_{\textcolor{black}{m}k}(i)=x_{\textcolor{black}{m}k}(i)+\textcolor{black}{\epsilon}_{\textcolor{black}{m}k}(i),
\end{equation}
where $i\in\{1,\ldots,N\}$, $x_{\textcolor{black}{m}k}(i)=\zeta x(i)e^{j\omega_{\textcolor{black}{m}k}iT}$, $\zeta=\frac{1}{d^{PL}_{\textcolor{black}{m}k}}$ and $\textcolor{black}{m}\in\{a,b\}$, $k\in\{a,b,e\}$ and $\textcolor{black}{m}\neq k$ \cite{LEOsat}. $\zeta$ denotes the path loss attenuation, where $d_{\textcolor{black}{m}k}$ is the distance between spacecraft $\textcolor{black}{m}$ and $k$, and $PL$ is the path loss exponent. $\omega_{jk}$ denotes the Doppler frequency shift for the link $j-k$, and $T$ is the symbol period.  $\textcolor{black}{\epsilon}_{\textcolor{black}{m}k}(i)\sim \mathcal{CN}(0,\sigma_{k}^2)$ denotes the additive white Gaussian noise at the receiver $k$, where $\mathcal{CN}(0,\sigma^2)$ denotes the i.i.d. complex normal distribution with zero mean and $\sigma^2$ variance. Considering the similar quality of the radio equipments at each {\color{black}spacecraft}, we assume that the noise variance is equal for each receiver as $\sigma_{k}^2= \sigma^2$. { \color{black} Note that, we consider that the carrier frequency information is available at the receiver, where the received signal can be down-converted to the baseband equivalent signal. The effect of oscillator-based frequency shifts would be very small in comparison with the Doppler frequency shift from high mobility. We also assume that the channel based phase shift can be recovered at receiver by utilizing a phase locked loop as described in \cite{NASA_comm}.} {\color{black} The spacecrafts are assumed to have same velocities during the pilot transmission reception scheme. We detail the timing considerations in the Appendix. }

The information contained in $\mathbf{y_{\textcolor{black}{m}k}}$ is fully present in its discrete Fourier transform: 
\begin{equation}
Y_{\textcolor{black}{m}k}(i)=X_{\textcolor{black}{m}k}(i)+K_{\textcolor{black}{m}k}(i),
\end{equation}
where $\mathbf{Y_{\textcolor{black}{m}k}}=\mathcal{F}\{y_{\textcolor{black}{m}k}\}=\left[Y_{\textcolor{black}{m}k}(1), Y_{\textcolor{black}{m}k}(2), \ldots, Y_{\textcolor{black}{m}k}(N) \right]$, $\mathbf{X_{\textcolor{black}{m}k}}=\mathcal{F}\{x_{jk}\}=\left[X_{jk}(1), X_{jk}(2), \ldots, X_{jk}(N) \right]$ and $\mathbf{\textcolor{black}{\varepsilon}_{\textcolor{black}{m}k}}=\mathcal{F}\{\textcolor{black}{\epsilon}_{\textcolor{black}{m}k}\}=\left[\textcolor{black}{\varepsilon}_{\textcolor{black}{m}k}(1), \textcolor{black}{\varepsilon}_{\textcolor{black}{m}k}(2), \ldots, \textcolor{black}{\varepsilon}_{\textcolor{black}{m}k}(N) \right]$. Since Gaussian processes are invariant against Fourier transform, the signal spectrum $\mathbf{X_{\textcolor{black}{m}k}}$, and the noise spectrum $\mathbf{\textcolor{black}{\varepsilon}_{\textcolor{black}{m}k}}$, are also complex Gaussian, zero-mean, and orthogonal processes. The spectral samples ${Y_{\textcolor{black}{m}k}(i)}$ are mutually uncorrelated because of the assumed stationarity of $X_{\textcolor{black}{m}k}(i)$.

Note that, as stated in \cite{crbsystem}, the phase of $\mathbf{Y_{\textcolor{black}{m}k}}$ carries no information about the Doppler frequency, since $\mathbf{Y_{\textcolor{black}{m}k}}$ has been modeled as a stochastic process with the aforementioned properties. Hence, it is sufficient to consider the power spectrum of the received data as 

\begin{equation}
\begin{aligned}
\mathbf{S_{\textcolor{black}{m}k}} &= \left[S_{\textcolor{black}{m}k}(1), S_{\textcolor{black}{m}k}(2), \ldots, S_{\textcolor{black}{m}k}(N)\right] \\ 
&= \left[|Y_{\textcolor{black}{m}k}(1)|^2, |Y_{\textcolor{black}{m}k}(2)|^2, \ldots, |Y_{\textcolor{black}{m}k}(N)|^2\right].
\end{aligned}
\end{equation}

Since $Y_{\textcolor{black}{m}k}(i)$ is a complex Gaussian process, the probability density function of each sample ${S_{\textcolor{black}{m}k}}(i)$ under the condition of a particular Doppler frequency $\omega_{\textcolor{black}{m}k}$ is given by the exponential distribution :

\begin{equation}
\rho(S_{\textcolor{black}{m}k}(i);\omega_{\textcolor{black}{m}k})= \frac{1}{\Theta_{\textcolor{black}{m}k}(i)}\text{exp}\left(-\frac{S_{\textcolor{black}{m}k}(i)}{\Theta_{\textcolor{black}{m}k}(i)}\right), 
\label{eq:pdf}
\end{equation}
where $\Theta_{\textcolor{black}{m}k}(i)$ denotes the NPSDS, and can be obtained by \cite{crbsystem} 
\begin{equation}
\begin{aligned}
\Theta_{\textcolor{black}{m}k}(i)= \mathbb{E}\{{S_{\textcolor{black}{m}k}}(i)\} &= \mathbb{E}\{|X_{\textcolor{black}{m}k}(i)+\textcolor{black}{\varepsilon}_{\textcolor{black}{m}k}(i)|^2\} \\
&=\mathbb{E}\{|X_{\textcolor{black}{m}k}(i)|^2\}+\mathbb{E}\{|\textcolor{black}{\varepsilon}_{\textcolor{black}{m}k}(i)|^2\}
\end{aligned},
\end{equation}
where $A^x_{\textcolor{black}{m}k}(i\Delta f-\omega_{\textcolor{black}{m}k})=\mathbb{E}\{|X_{\textcolor{black}{m}k}(i)|^2\}$, $A^n_{\textcolor{black}{m}k}=\mathbb{E}\{|K_{\textcolor{black}{m}k}(i)|^2\}$. Note that $A^x_{\textcolor{black}{m}k}(f)$ is the a priori known nominal power spectral density of the signal; $\Delta f$ is the frequency sampling interval; and $\omega_{\textcolor{black}{m}k}$ is the Doppler frequency shift. Considering $A^x_{\textcolor{black}{m}k}(f)$ is periodic with period $\Delta f$, and $A^n_{\textcolor{black}{m}k}$ is a constant, we can deduce that $\Theta_{\textcolor{black}{m}k}(i)$ is also periodic with $\Delta f$, and consequently we can drop $i$ and denote the NPSDS as $$\Theta_{\textcolor{black}{m}k}=\Theta_{\textcolor{black}{m}k}(i), \forall i .$$


\begin{proposition}
 The NPSDSs for reciprocal links will be equal to \begin{equation}
\Theta_{\textcolor{black}{m}k}=\Theta_{k\textcolor{black}{m}},
\end{equation}
while any other {\color{black}spacecraft} than $\textcolor{black}{m}$ and $k$ would observe different values.
\label{prop:NPSDS}
\end{proposition}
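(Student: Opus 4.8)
\emph{Proof plan.} The plan is to expand $\Theta_{mk} = A^x_{mk}(i\Delta f - \omega_{mk}) + A^n_{mk}$ and, using the periodicity remark above that removes the dependence on the bin index $i$, identify the three quantities on which the NPSDS actually depends: the path-loss factor $\zeta = 1/d^{PL}_{mk}$ and the transmitted-pilot shape, which together form the function $A^x_{mk}(\cdot)$; the noise level $A^n_{mk}$; and the Doppler offset $\omega_{mk}$ at which $A^x_{mk}$ is sampled. It then suffices to check that each of these is preserved when the transmitter and the receiver are interchanged within the legitimate pair.

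I would verify them one by one. First, since the Euclidean distance is symmetric, $d_{mk}=d_{km}$, so the two directions share the same $\zeta$; as Alice and Bob transmit the identical pilot vector $\mathbf{x}$, the functions $A^x_{mk}(\cdot)$ and $A^x_{km}(\cdot)$ coincide. Second, by the equal-equipment assumption $\sigma_m^2=\sigma_k^2=\sigma^2$, hence $A^n_{mk}=A^n_{km}$. Third --- the crux --- Observation~1 relates the reciprocal Doppler offsets ($\omega_{km}=-\omega_{mk}$), so I must show that $A^x_{mk}(i\Delta f - \omega_{mk})$ is unchanged under $\omega_{mk}\mapsto-\omega_{mk}$; this uses the period-$\Delta f$ structure of $A^x_{mk}$ to discard $i$, together with the fact that the BPSK pilot ($x(i)=e^{j\phi_i}$, $\phi_i\in\{-\pi,\pi\}$) is real-valued, so its nominal spectrum is even, $|X(f)|^2=|X(-f)|^2$. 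Equivalently, the explicit Doppler formula shows that the inner product $(\mathbf{v}_m-\mathbf{v}_k)^T(\mathbf{p}_m-\mathbf{p}_k)$ is invariant under $m\!\leftrightarrow\!k$, so the sign is moot. Combining the three facts gives $\Theta_{mk}=\Theta_{km}$.

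For the converse, let $n\notin\{m,k\}$ receive $m$'s pilots, so $\Theta_{mn}=A^x_{mn}(i\Delta f - \omega_{mn})+A^n$. Its path-loss scaling is fixed by $d_{mn}\neq d_{mk}$, and, more robustly, Observation~1 gives $\omega_{mn}\neq\pm\omega_{mk}$, so the non-constant spectrum $A^x$ is sampled at a different point; hence $\Theta_{mn}\neq\Theta_{mk}$, and symmetrically $\Theta_{kn}\neq\Theta_{mk}$. I would state this as a generic conclusion, holding for almost every spacecraft geometry, since one can always place $n$ on a measure-zero locus where a coincidence occurs.

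The step I expect to be the real obstacle is the third one above: tying the ``$\omega\mapsto-\omega$'' invariance of the NPSDS to the symmetry of the pilot power spectrum. It is the only place where the modulation format enters, so the claim is genuinely coupled to the use of a pilot with an even spectrum (the real BPSK pilot here); the distance and noise symmetries, by contrast, are immediate from the model.
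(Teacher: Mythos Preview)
Your argument is correct and follows essentially the same route as the paper: the paper's proof also invokes the even symmetry of the pilot power spectrum, $A^x_{mk}(i\Delta f-\omega_{mk})=A^x_{mk}(i\Delta f+\omega_{mk})$, combines it with Observation~1 ($\omega_{km}=-\omega_{mk}$) and the identical-pilot assumption to equate the signal terms, and then uses the equal-noise-variance hypothesis for the $A^n$ terms. You are somewhat more explicit than the paper in isolating the path-loss symmetry $d_{mk}=d_{km}$ and in arguing the third-party (converse) part, which the paper states but does not spell out; and your closing remark that the even-spectrum step is the only modulation-dependent one is a fair observation the paper leaves implicit.
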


\begin{proof}
We can proof the Proposition 1 in two steps:
\begin{enumerate}
\item The NPSDS follows the symmetric relation as $A^x_{\textcolor{black}{m}k}(i\Delta f-\omega_{jk})= A^x_{\textcolor{black}{m}k}(i\Delta f+\omega_{\textcolor{black}{m}k})$ \cite{proakis}. As stated in Observation 1, $\omega_{\textcolor{black}{m}k}=-\omega_{k\textcolor{black}{m}}$. Considering the same pilot sequence is transmitted from both {\color{black}spacecrafts}, $A^x_{\textcolor{black}{m}k}(f)=A^x_{k\textcolor{black}{m}}(f)$.  Consequently, we can state that  $ A^x_{\textcolor{black}{m}k}(i\Delta f+\omega_{\textcolor{black}{m}k})= A^x_{k\textcolor{black}{m}}(i\Delta f+\omega_{\textcolor{black}{m}k})$.
\item As mentioned above, the variance of the thermal noise at each receiver is assumed to be equal. Therefore, $A^n_{\textcolor{black}{m}k}=A^n_{k\textcolor{black}{m}}$ and $$
\underset{\Theta_{\textcolor{black}{m}k}}{\underbrace{A^x_{\textcolor{black}{m}k}(i\Delta f+\omega_{\textcolor{black}{m}k})+A^n_{\textcolor{black}{m}k}}} =\underset{\Theta_{k\textcolor{black}{m}}}{\underbrace{A^x_{k\textcolor{black}{m}}(i\Delta f+\omega_{k\textcolor{black}{m}})+A^n_{k\textcolor{black}{m}}} } 
 $$
\end{enumerate}
\vspace{-2mm}
\end{proof}

As indicated in Proposition \ref{prop:NPSDS}, the ideal NPSDS at Alice and Bob would be equal for each communication block $\Theta_{ab}(t)=\Theta_{ba}(t)=\Theta_t$. However, considering the randomness in the mobility of the spacecrafts, the NPSDS are also random variables. Due to the dependency of the power spectral density to the modulation type, NPSDS distribution would also change with the modulation type. A general probability distribution model for the NPSDS become intractable. Therefore, we provide an approximation for  NPSDS modeling in the following proposition. 

\begin{proposition}
Considering BPSK modulation at the transmitter, the NPSDS at Alice and Bob can be modeled by $\frac{\Theta_t}{\sigma^2_{\Theta}T^2}\sim {\mathcal{\chi}'}^2_1(\frac{\hat{\omega}^2_{ab}}{\sigma^2_v})$, where $\sigma^2_{\Theta}=\frac{PT^3\sigma^2_v}{2}$. The probability  distribution function of ${\Theta_t}$ becomes
\begin{equation}
\rho(\Theta_t)=\frac{1}{2\sigma^2_{\Theta}}e^{-\frac{1}{2}\left(\frac{\Theta_t}{\sigma^2_{\Theta}}+\lambda\right)}\left({\frac{\Theta^2_t}{\sigma^2_{\Theta} \lambda}}\right)^{-\frac{1}{4}}\mathcal{I}_{-1/2}\left(\sqrt{\lambda\frac{\Theta^2_t}{\sigma^2_{\Theta}}}\right),
\end{equation}
where $\lambda=\frac{\hat{\omega}^2_{ab}}{\sigma^2_v}$.
\label{prop:psd}
\end{proposition}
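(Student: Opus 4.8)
The plan is to reduce the proposition to a single structural fact: for a BPSK pilot, the common NPSDS $\Theta_t$ of Proposition~\ref{prop:NPSDS} is, up to a deterministic scaling, the \emph{square} of the reciprocal Doppler shift $\omega_{ab}$, which by Propositions~\ref{prop:rate} and~\ref{prop:Brownian} is Gaussian. Once that is in hand, both the ${\chi'}_1^2$ law and its density follow from a standardisation and a change of variables, so I expect essentially all of the work to sit in establishing the quadratic relation, with the remainder being bookkeeping.

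\emph{Step 1 --- writing $\Theta_t$ as a scaled square of $\omega_{ab}$.} Starting from $\Theta_t=A^x_{ab}(i\Delta f-\omega_{ab})+A^n_{ab}$, I would first discard the noise floor $A^n_{ab}$, i.e.\ work in the signal-dominated regime in which the quantiser operates, so that $\Theta_t\approx A^x_{ab}(i\Delta f-\omega_{ab})$. Using the a~priori known BPSK power spectrum $A^x$, which the modulation $e^{j\omega_{ab}iT}$ merely recentres about $\omega_{ab}$, a second-order expansion of $A^x$ in the Doppler offset leaves, as dominant term, a contribution proportional to $(\omega_{ab}T)^2$; folding the pilot power $P$, the symbol period $T$ and the pulse-shape constant into a single coefficient and writing it in the normalisation used in the statement gives $\Theta_t=\big(\sigma^2_\Theta T^2/\sigma^2_v\big)\,\omega_{ab}^2$ with $\sigma^2_\Theta=PT^3\sigma^2_v/2$ (here $\sigma^2_v$ only normalises the next step and cancels inside the coefficient). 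This is the delicate step and the place where the ``can be modeled by'' approximation enters; the point to handle carefully is that it is the \emph{full} Doppler $\omega_{ab}=\hat\omega_{ab}+\tilde\omega_{ab}$ that is squared, not only the random part $\tilde\omega_{ab}$, since the deterministic offset $\hat\omega_{ab}$ is exactly what surfaces as the non-centrality parameter.

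\emph{Step 2 --- the law of $\omega_{ab}$ and the conclusion.} By Proposition~\ref{prop:rate} the key randomness is carried entirely by $\tilde\omega_{ab}$, and by Proposition~\ref{prop:Brownian} with the i.i.d.\ Gaussian velocity-perturbation model, $\tilde\omega_{ab}\sim\mathcal{N}(0,\sigma^2_v)$; hence $\omega_{ab}=\hat\omega_{ab}+\tilde\omega_{ab}\sim\mathcal{N}(\hat\omega_{ab},\sigma^2_v)$, so $\omega_{ab}/\sqrt{\sigma^2_v}$ is unit-variance Gaussian with mean $\hat\omega_{ab}/\sqrt{\sigma^2_v}$. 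Its square $\omega_{ab}^2/\sigma^2_v$ is then, by definition, non-central chi-square with one degree of freedom and non-centrality $\lambda=\hat\omega_{ab}^2/\sigma^2_v$, and combining with Step~1 gives $\Theta_t/(\sigma^2_\Theta T^2)=\omega_{ab}^2/\sigma^2_v\sim{\chi'}_1^2(\lambda)$, which is the first claim. For the density, I would substitute $\Theta_t=\sigma^2_\Theta T^2 V$ with $V\sim{\chi'}_1^2(\lambda)$ into the standard one-degree-of-freedom non-central chi-square density $f_V(v)=\tfrac12 e^{-(v+\lambda)/2}(v/\lambda)^{-1/4}\mathcal{I}_{-1/2}(\sqrt{\lambda v})$ and multiply by the Jacobian $1/(\sigma^2_\Theta T^2)$; collecting the powers of $\sigma^2_\Theta$, $T$ and $\lambda$ inside the $(\cdot)^{-1/4}$ prefactor and the Bessel argument then reproduces the stated $\rho(\Theta_t)$, the only risk at this stage being arithmetic slippage in those constants.
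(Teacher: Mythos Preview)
Your proposal is correct and follows the paper's argument essentially verbatim: the paper writes the BPSK NPSDS explicitly as $\Theta_t=\tfrac{PT}{2}\,\mathrm{sinc}^2(1-\omega_{ab}T)$, Taylor-expands $\mathrm{sinc}^2(x)\approx(x-1)^2$ near $x=1$ to obtain $\Theta_t\approx\tfrac{PT^3}{2}\,\omega_{ab}^2$, and then proceeds exactly as in your Step~2 via $\omega_{ab}\sim\mathcal{N}(\hat\omega_{ab},\sigma_v^2)$ and a change of variables. Your concern about arithmetic slippage in the constants is well placed, since the paper's own derivation actually gives $\Theta_t/\sigma_\Theta^2\sim{\chi'}_1^2(\lambda)$ without the extra $T^2$ factor appearing in the statement.
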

\begin{proof}
As we modeled in Section II, the Doppler frequency shift can be written as the superposition of two components, $\omega_{ab}(t)=\hat{\omega}_{ab}(t)+\tilde{\omega}_{ab}(t)$, where $\tilde{\omega}_{ab}(t)\sim\mathcal{N}(0,\sigma^2_v)$. As given in \cite{proakis}, considering the BPSK modulation NPSDS becomes 
\begin{equation}
\Theta_t=\frac{PT}{2}\text{sinc}^2(1-\omega_{ab}T),
\end{equation}
where $P$ denotes the average transmission signal power, $T$ denotes the symbol period. 
By utilizing the Taylor series expansion of $\text{sinc}^2(x)=(x-1)^2+\mathcal{O}(x^4)$ as $x\rightarrow 1$, the NPSDS becomes 
\begin{equation}
\Theta_t= \frac{PT^3}{2}\left({\omega_{ab}}\right)^2. 
\end{equation}
We can model the Doppler frequency shift as $\omega_{ab}\sim\mathcal{N}(\hat{\omega}_{ab}, \sigma_v^2)$. Considering this, the random variable $\zeta= \frac{\Theta_t}{\sigma^2_v\frac{PT^3}{2}}$ becomes a non-central chi-squared distribution with 1 degrees of freedom and $\lambda=\frac{\hat{\omega}^2_{ab}}{\sigma^2_v}$ non-centrality parameter. Since , ${\Theta_t}=\zeta{\sigma^2_v\frac{PT^3}{2}}$, the probability distribution function of $\Theta_t$ can be obtained by the change of the variables as $$\rho(\Theta_t)=\frac{1}{2\sigma^2_{\Theta}}e^{-\frac{1}{2}\left(\frac{\Theta_t}{\sigma^2_{\Theta}}+\lambda\right)}\left({\frac{\Theta^2_t}{\sigma^2_{\Theta} \lambda}}\right)^{-\frac{1}{4}}\mathcal{I}_{-1/2}\left(\sqrt{\lambda\frac{\Theta^2_t}{\sigma^2_{\Theta}}}\right).$$ 
\end{proof}
Note that, the given distribution in Proposition \ref{prop:psd} applies for the BPSK modulation. With different modulation type, the shape of the power spectral density would also be different. In Section VI, we provide the tightness of the assumptions made in the numerical analyses on the Proposition \ref{prop:psd}. Without explicitly calculating the Doppler frequency, {\color{black}spacecrafts} may individually estimate the NPSDS from their observation sequences, and they can exploit this value as a shared secret for key generation mechanism. In the following, we describe the estimation of the NPSDS process at the {\color{black}spacecraft}s.
\subsection{NPSDS Estimation}
Due to the NPSDS change in each communication block, the NPSDS estimation would be repeated for each communication block separately.  Let us denote the estimated NPSDS at receiving node $k$ in an arbitrary communication block as ${\Theta}_{\textcolor{black}{m}k}$. Note that considering Alice and Bob, $\Theta_{ab}=\Theta_{ba}=\Theta_t$. The maximum likelihood (ML) estimation of the parameter ${\Theta}_{\textcolor{black}{m}k}$ can be given as:

\begin{equation}
\begin{aligned}
\hat{\Theta}_{\textcolor{black}{m}k} &= \max_{{\Theta}_{\textcolor{black}{m}k}}\left\{ \prod\limits_{i=1}^{N} \frac{1}{{\Theta}_{\textcolor{black}{m}k}}\text{exp}\left(-\frac{S_{\textcolor{black}{m}k}(i)}{\Theta_{\textcolor{black}{m}k}}\right) \right\} \\
&=\max_{{\Theta}_{\textcolor{black}{m}k}}\left\{\frac{1}{{\Theta}^N_{\textcolor{black}{m}k}}\text{exp}\left(-\frac{\sum\limits_{i=1}^{N}S_{\textcolor{black}{m}k}(i)}{\Theta_{\textcolor{black}{m}k}}\right) \right\}. 
\end{aligned}
\end{equation}

The log-likelihood function for the estimation problem becomes
\begin{equation}
\mathcal{L}(S_{\textcolor{black}{m}k}(i);{\Theta}_{\textcolor{black}{m}k})= -N\ln({\Theta}_{\textcolor{black}{m}k})-\left(\frac{\sum\limits_{i=1}^{N}S_{\textcolor{black}{m}k}(i)}{\Theta_{\textcolor{black}{m}k}}\right).
\end{equation}

\textcolor{black}{We obtain the value of the function when its first derivative is zero to find the maximum value of the log-likelihood function}

\begin{equation}
\frac{\partial \mathcal{L}(S_{\textcolor{black}{m}k}(i);{\Theta}_{\textcolor{black}{m}k})}{\partial \Theta_{\textcolor{black}{m}k}}= -\frac{N}{{\Theta}_{\textcolor{black}{m}k}}+\frac{\sum\limits_{i=1}^{N}S_{\textcolor{black}{m}k}(i)}{\Theta^2_{\textcolor{black}{m}k}}=0.
\end{equation}
The result of this equation provides us the ML estimator for the ${\Theta}_{\textcolor{black}{m}k}$ parameter as 
\begin{equation}
\hat{\Theta}_{\textcolor{black}{m}k}= M^{\textcolor{black}{m}k}_s= \frac{\sum\limits_{i=1}^{N}S_{\textcolor{black}{m}k}(i)}{N},
\label{eq:estimation}
\end{equation}
where $ M^{\textcolor{black}{m}k}_s$ is the sample mean of the observed power spectral density samples. The steps followed in NPSDS estimation block can be summarized as in Figure \ref{fig:psd}.
\begin{figure}[t]
	\centering
	\includegraphics[width=\linewidth]{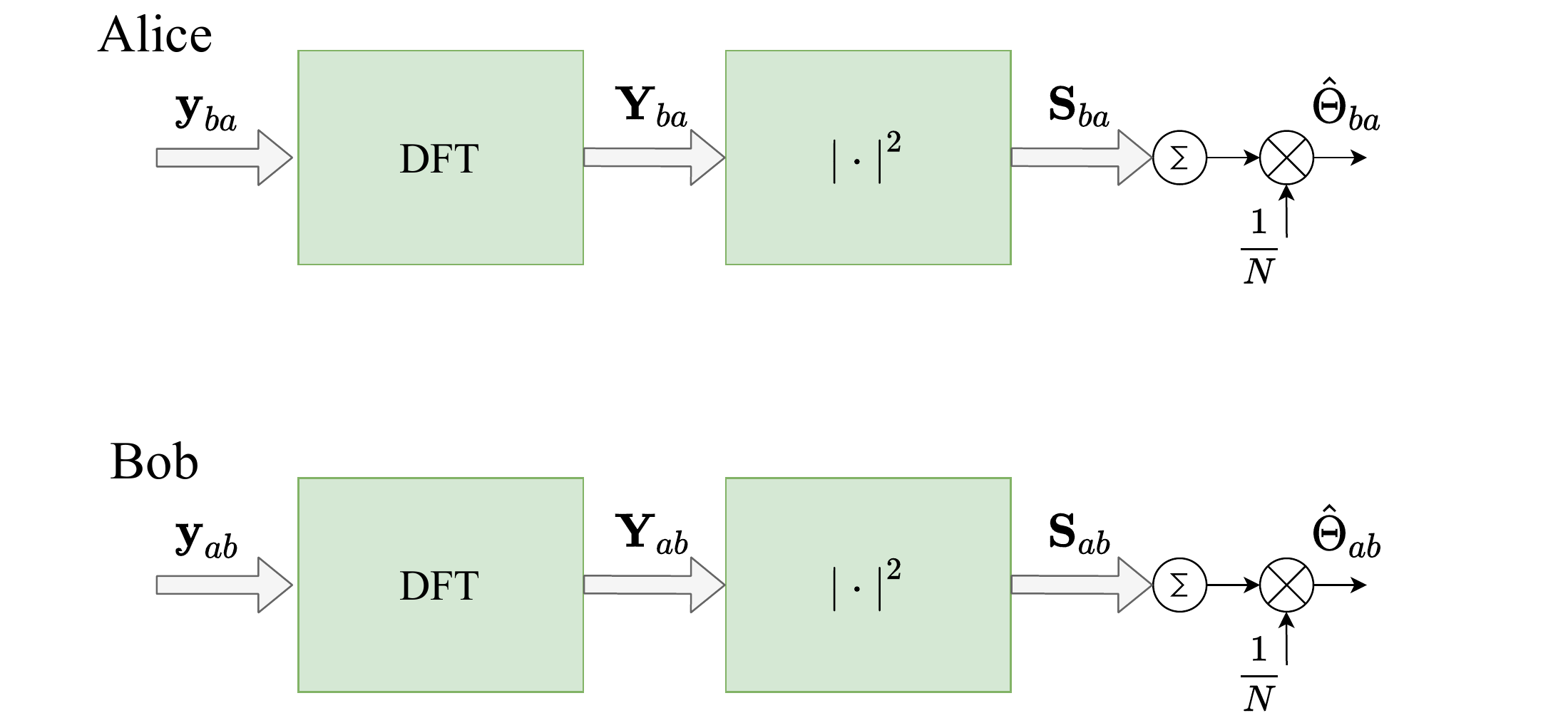}
	\caption{An illustration of NPSDS estimation at Alice and Bob.}
	\label{fig:psd}
\end{figure}
\subsection{Quantization}
After obtaining  $\hat{\Theta}_{\textcolor{black}{m}k}$ values, terminals quantize and encode this value in order to obtain a secret key sequence. In this paper, we assume uniform quantizer with a step size $\Delta$ as in \cite{Topal}. From $N$ observations, we get $Q$ number of quantized secret key bits as $q_{b}= \left \lfloor \frac{\hat{\Theta}_{ab}}{\Delta} \right \rfloor$  and $q_{a}= \left \lfloor \frac{\hat{\Theta}_{ba}}{\Delta} \right \rfloor$, where $ \left \lfloor \cdot \right \rfloor$ denotes the floor function.

Note that, the main focus of this work is producing the raw key from a novel secrecy source, mainly the Doppler frequency shift. The key reconciliation and hybrid key generation algorithms in  \cite{hybrid} can be applied to the quantized raw key bits in order to reduce the erroneous elements in the generated key bits.

\section{Key Disagreement Rate}
Key disagreement rate (KDR) is the ratio of mismatched bits in the generated keys. We adopt a similar approach to \cite{WiSee} in order to obtain theoretical expressions of the KDR considering the proposed key generation system model.  
Considering Proposition \ref{prop:NPSDS}, we can state that $\Theta_{ba}=\Theta_{ab}=\Theta_t$ for the $t^{\text{th}}$ communication block. Note that, one main difference with our previous work is that, we model $\Theta_t$ with a random variable as given in the Proposition \ref{prop:psd}. In the following, we approximate the estimated NPSDS values at Alice and Bob by using the central limit theorem.  
\begin{proposition}
For large $N$, NPSDS estimates follow the truncated normal distribution with support $[0,\infty]$ as $\hat{\Theta}_{ab}\sim \mathcal{N}(\Theta_t, \frac{\Theta^2_t}{N})$, and $\hat{\Theta}_{ba}\sim \mathcal{N}(\Theta_t, \frac{\Theta^2_t}{N})$.
\label{prop:central_limit}
\end{proposition}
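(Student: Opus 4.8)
The plan is to apply the classical central limit theorem to the ML estimator $\hat{\Theta}_{mk} = \frac{1}{N}\sum_{i=1}^{N} S_{mk}(i)$ derived in \eqref{eq:estimation}, conditioned on a fixed value of $\Theta_t$ in the current communication block. First I would fix the block index $t$ and condition on $\Theta_t$, so that the NPSDS is a deterministic constant for the duration of the argument; the randomness coming from the mobility (the non-central chi-squared behavior of Proposition \ref{prop:psd}) enters only afterward through $\Theta_t$ itself. Given $\Theta_t$, recall from \eqref{eq:pdf} that each spectral sample $S_{mk}(i)$ is exponentially distributed with mean $\Theta_{mk} = \Theta_t$, hence $\mathbb{E}\{S_{mk}(i)\} = \Theta_t$ and $\mathrm{Var}\{S_{mk}(i)\} = \Theta_t^2$. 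Moreover, as noted in the paragraph following the definition of $\mathbf{S_{mk}}$, the spectral samples $S_{mk}(i)$ are mutually uncorrelated (indeed, under the stationarity assumption on $X_{mk}(i)$, the frequency-domain samples are orthogonal), so the $S_{mk}(i)$ can be treated as i.i.d. across $i$.

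Next I would invoke the Lindeberg--L\'evy CLT: for i.i.d. summands with finite mean $\Theta_t$ and finite variance $\Theta_t^2$, the normalized sample mean satisfies $\sqrt{N}\,(\hat{\Theta}_{mk} - \Theta_t) \xrightarrow{d} \mathcal{N}(0, \Theta_t^2)$, equivalently $\hat{\Theta}_{mk} \approx \mathcal{N}(\Theta_t, \Theta_t^2/N)$ for large $N$. Applying this separately to the link $a\to b$ and the link $b\to a$ and using Proposition \ref{prop:NPSDS} (which gives $\Theta_{ab}=\Theta_{ba}=\Theta_t$), I obtain $\hat{\Theta}_{ab}\sim\mathcal{N}(\Theta_t,\Theta_t^2/N)$ and $\hat{\Theta}_{ba}\sim\mathcal{N}(\Theta_t,\Theta_t^2/N)$ as claimed. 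The final step is to account for the support: since $\hat{\Theta}_{mk}$ is an average of nonnegative quantities it is supported on $[0,\infty)$, so the Gaussian approximation should be read as a normal distribution truncated to $[0,\infty)$ (the truncation mass is negligible when $N$ is large, since the mean $\Theta_t$ sits many standard deviations $\Theta_t/\sqrt{N}$ away from $0$).

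The main obstacle is the independence/identical-distribution justification needed to apply the CLT cleanly. The samples $S_{mk}(i)$ are only asserted to be \emph{uncorrelated} (from spectral stationarity), not independent, and strictly speaking the DFT of a finite length-$N$ record produces samples whose joint law is not exactly a product even in the Gaussian case unless one passes to an asymptotic/periodogram regime. I would handle this by appealing to the standard asymptotic theory of the periodogram (the $|Y_{mk}(i)|^2$ are asymptotically independent exponential random variables as $N\to\infty$), which is exactly the regime ``for large $N$'' in which the proposition is stated; alternatively one can invoke a CLT for weakly dependent / uncorrelated-with-summable-cumulants sequences. A secondary, more minor point is verifying that the $\mathcal{O}$-term dropped in forming $\Theta_t$ and the exponential-model idealization do not perturb the first two moments at order $1/N$; this is routine and I would not belabor it. Once the i.i.d.-in-the-limit claim is granted, the mean $\Theta_t$ and variance $\Theta_t^2/N$ fall out immediately from $\mathbb{E}\{S\}=\Theta_t$, $\mathrm{Var}\{S\}=\Theta_t^2$ for an $\mathrm{Exp}(1/\Theta_t)$ variable, and the truncation remark closes the argument.
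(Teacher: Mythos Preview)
Your proposal is correct and follows essentially the same route as the paper's own proof: write $\hat{\Theta}_{mk}$ as the sample mean of the exponential samples $S_{mk}(i)$ from \eqref{eq:pdf}, read off mean $\Theta_t$ and variance $\Theta_t^2$, apply the CLT to get $\sqrt{N}(\hat{\Theta}_{mk}-\Theta_t)\to\mathcal{N}(0,\Theta_t^2)$, use $\Theta_{ab}=\Theta_{ba}=\Theta_t$ for the reciprocal link, and then truncate to $[0,\infty)$ because $\hat{\Theta}_{mk}\ge 0$. The paper simply asserts the $S_{mk}(i)$ are i.i.d.\ and invokes the CLT directly, whereas you additionally flag (and resolve via the asymptotic periodogram argument) the gap between ``uncorrelated'' and ``independent''; this is a welcome extra care but not a different method.
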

\begin{proof}
Let us rewrite $\hat{\Theta}_{\textcolor{black}{m}k}= M^{\textcolor{black}{m}k}_s= \frac{\sum\limits_{i=1}^{N}S_{\textcolor{black}{m}k}(i)}{N}$. Here, $S_{\textcolor{black}{m}k}(i)$ samples are i.i.d, and each one of them follows the probability distribution function given in \textcolor{black}{Eq. \ref{eq:pdf}}. Considering this, for large $N$, the central limit theorem states that $\sqrt{N}\left(\hat{\Theta}_{ab}- \Theta_t\right)\sim \mathcal{N}(0,\Theta^2_t)$ \cite{proakis}. Consequently $\hat{\Theta}_{ab}\sim \mathcal{N}(\Theta_t, \frac{\Theta^2_t}{N})$. Identical operations can be applied on $\hat{\Theta}_{ba}$, due to $\Theta_{ab}=\Theta_{ba}=\Theta_t$. Note that, the NPSDS should be larger than equal to zero from its definition, hence their distribution can be given as the truncated normal distribution with the support of $[0,\infty]$ \cite{kotz2004continuous}. 
\end{proof}

Assuming the uniform quantizer with a step size $\Delta$, an estimated $\hat{\Theta}_{ab}$ is mapped to the $l^{\text{th}}$ quantization level, where $l= \left \lfloor \frac{\hat{\Theta}_{ab}}{\Delta} \right \rfloor$ and the quantization interval is described by $I_l=[l\Delta, (l+1)\Delta]$. The probability that $\hat{\Theta}_{ba}$ locates in  $I_l$ under given $\tilde{\Theta}_{ab}$ can be obtained by
\begin{equation}
P_l=\int\limits_{l\Delta}^{(l+1)\Delta}\rho(\hat{\Theta}_{ba}|\hat{\Theta}_{ab},\Theta_t)d\hat{\Theta}_{ba},
\end{equation}
where $\rho(\hat{\Theta}_{ba}|\hat{\Theta}_{ab},\Theta_t)$ denotes the probability density function of $\hat{\Theta}_{ba}$ given $\hat{\Theta}_{ab}$ and $\Theta_t$. In order to obtain $P_l$, let us first obtain $\rho(\hat{\Theta}_{ba}|\hat{\Theta}_{ab},\Theta_t)$.
For a given $\hat{\Theta}_{ab}$, the observation signal at Alice can be modeled as $\hat{\Theta}_{ba}=\hat{\Theta}_{ab}-{K}_{ab}+{K}_{ba}$, where $\{{K}_{ab},{K}_{ba}\}\sim \mathcal{N}(0,\frac{\Theta^2_t}{N})$.  In this case, we can state that $\hat{\Theta}_{ba} \sim \mathcal{N}(\hat{\Theta}_{ab},\frac{2\Theta^2_t}{N})$. Therefore $P_l$ becomes
\begin{equation}
P_l=\!\!\!\!\!\!\int\limits_{l\Delta}^{(l+1)\Delta} \frac{\sqrt{N}}{\sqrt{8\pi }\Theta_t} e^{-\frac{N(\hat{\Theta}_{ba}-\hat{\Theta}_{ab})^2}{8\Theta^2_t}} d\hat{\Theta}_{ba},
\end{equation}
The closed form expression for the $P_l$ can be expressed by 
\begin{equation}
P_l= \text{erf}\left[\frac{(l+1)\Delta-\hat{\Theta}_{ab}}{\sqrt{\frac{8\Theta^2_t}{N}}}\right]- \text{erf}\left[\frac{(l)\Delta-\hat{\Theta}_{ab}}{\sqrt{\frac{8\Theta^2_t}{N}}}\right]. 
\end{equation}
Considering that $\hat{\Theta}_{ab}$ is also a random variable, the key matching probability can be obtained by
\begin{equation}
P_c=\int_{0}^{\infty}\left(\int_{0}^{\infty}P_l \rho(\tilde{\Theta}_{ab}|\Theta_t) d\tilde{\Theta}_{ab}\right)\rho(\Theta_t)d\Theta_t.
\end{equation} 

To the best of authors knowledge, a closed form expression for the key matching probability is not available. In the most open form, the key matching probability becomes as in (\ref{eq:final}). In the following, we provide the numerical analysis of the proposed key generation mechanism. 
\begin{figure*}[t]
	\footnotesize
\begin{equation}
P_c=\int_{0}^{\infty}\left(\int_{0}^{\infty}\text{erf}\left[\frac{(l+1)\Delta-\hat{\Theta}_{ab}}{\sqrt{\frac{8\Theta^2_t}{N}}}\right]- \text{erf}\left[\frac{(l)\Delta-\hat{\Theta}_{ab}}{\sqrt{\frac{8\Theta^2_t}{N}}}\right]\frac{\sqrt{N}}{\sqrt{2\pi }\Theta_t}e^{-\frac{N(\hat{\Theta}_{ab}-{\Theta}_{t})^2}{2\Theta^2_t}}d\hat{\Theta}_{ab}\right)\frac{1}{2\sigma^2_{\Theta}}e^{-\frac{1}{2}\left(\frac{\Theta_t}{\sigma^2_{\Theta}}+\lambda\right)}\left({\frac{\Theta^2_t}{\sigma^2_{\Theta} \lambda}}\right)^{-\frac{1}{4}}\mathcal{I}_{-1/2}\left(\sqrt{\lambda\frac{\Theta^2_t}{\sigma^2_{\Theta}}}\right)d\Theta_t.
\label{eq:final}
\end{equation}
\end{figure*}

\section{Numerical Analysis}
In this section, we first analyze the secret key rate that can be extracted from the Doppler frequency measurements in the considered scenario. In the following, we provide the numerical analysis on the proposed secret key generation mechanism based on the estimated NPSDSs. \textcolor{black}{We conduct the Monte Carlo simulations on MATLAB.}
\subsection{Maximum Achievable Key Rate from Doppler Frequency Shift Analysis}
In this part, we numerically analyze the results given in Proposition \ref{prop:Brownian} and in Proposition \ref{prop:rate}. The velocity model at Alice and Bob follows the model given in the Section II.  In order to illustrate the importance of the random velocity variable in key generation, we define mobility constant at Alice as $\textcolor{black}{\kappa}_a= \frac{\sigma^2_v}{||\hat{\mathbf{v}}_a||}$, and similarly at Bob as $\textcolor{black}{\kappa}_b= \frac{\sigma^2_v}{||\hat{\mathbf{v}}_b||}$. The parameters for the numerical analyses given in Table \ref{tab:simulation_parameters}. 

Figure \ref{fig:key_rate} illustrates the maximum achievable secret key rate values for different mobility constants at Alice and Bob. One observation is that the secret key rate gradually increases as any one of the mobility constants increases. Since the randomness of the secret key is directly related with the random movement of Alice and Bob, the  secret key rate increases as the contribution of the random movement increases in any one of the nodes. Furthermore, the secret key rate gets its maximum value as the random velocity component at Alice and Bob becomes more dominant to the constant velocity component. However, this kind of setting is not realizable considering the space missions requirements. Therefore, we assume that the lower $k$ region would illustrate more realistically the future space missions.  Note that in line with the designed secret key generation procedure, the secret key rate changes. The values given here is an upper bound for the maximum amount of secret key can be extracted from the Doppler frequency shift. 
\begin{figure}[tb]
\centering
\includegraphics[width=\linewidth]{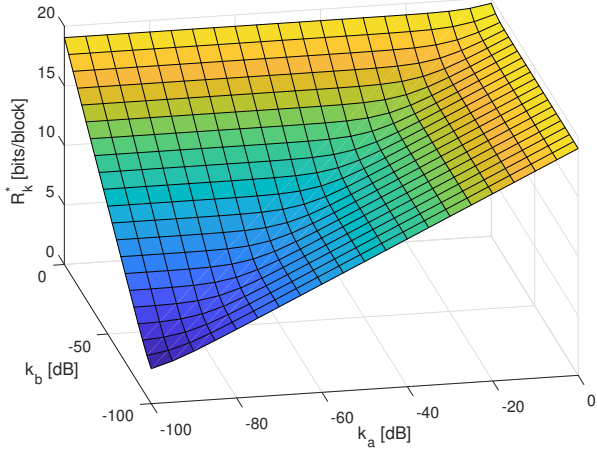}
\caption{Maximum achievable secret key rate for varying mobility constants of Alice and Bob.}
\label{fig:key_rate}
\end{figure}

\subsection{NPSDS Modeling Analysis}
In the following, we present the numerical analysis related with the proposed secret key generation methodology.  
\begin{figure}[t]
	\centering
	\includegraphics[width=\linewidth]{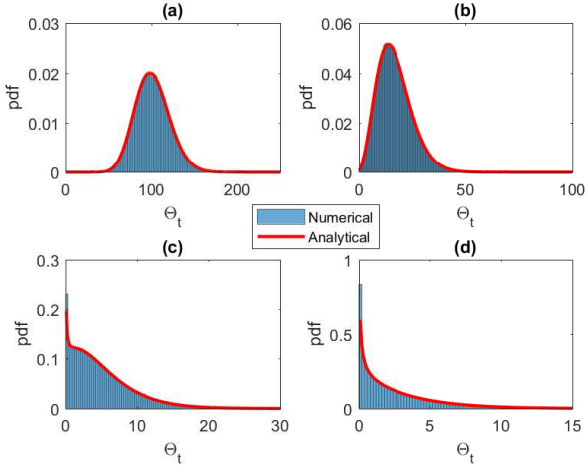}
	\caption{Empirical probability density functions for NPSDSs for three different $k$ values, (a) $k=0.1$, (b) $k=0.25$ and (c) $k=0.5$, (d) $k=0.8$.}
	\label{fig:pdfs}
\end{figure}

 Figure \ref{fig:pdfs} provides a comparison on the probability density functions of NPSDSs when the value is calculated numerically, and when the value is calculated by the approximation given in (19).The simulation parameters are given in Table \ref{tab:simulation_parameters}. Note that in this simulation we assume $\textcolor{black}{\kappa}_a=\textcolor{black}{\kappa}_b=\textcolor{black}{\kappa}$, where $0\leq \textcolor{black}{\kappa} <1$. One important observation is that, the center point of the distribution shifts near to zero as the random competent becomes dominant. One reason for this trend is that as the $\textcolor{black}{\kappa}$ increases $\sigma^2_\Theta$ also increases. Because of this, the non-centrality  parameter of the $\Theta_t$ decreases and the center of the distribution gets closer to zero. From another perspective, as the $\textcolor{black}{\kappa}$ increases total amount of movement and Doppler frequency increases. By increasing these, the samples from the nominal power spectral densities would be taken from further than their center and maximum values. Considering the shape of the power spectral density of the modulated waveforms, as the sample points get farther than the center, the amplitude values decreases. Therefore, the center of the distribution tends to shift near zero.

\subsection{NPSDS Estimation Analysis}
\begin{figure*}[t]
	\begin{center}	
		\begin{adjustbox}{max width=\textwidth}	
			\subfigure[]{
				\label{fig:pdfN10}
				\includegraphics[width=\textwidth]{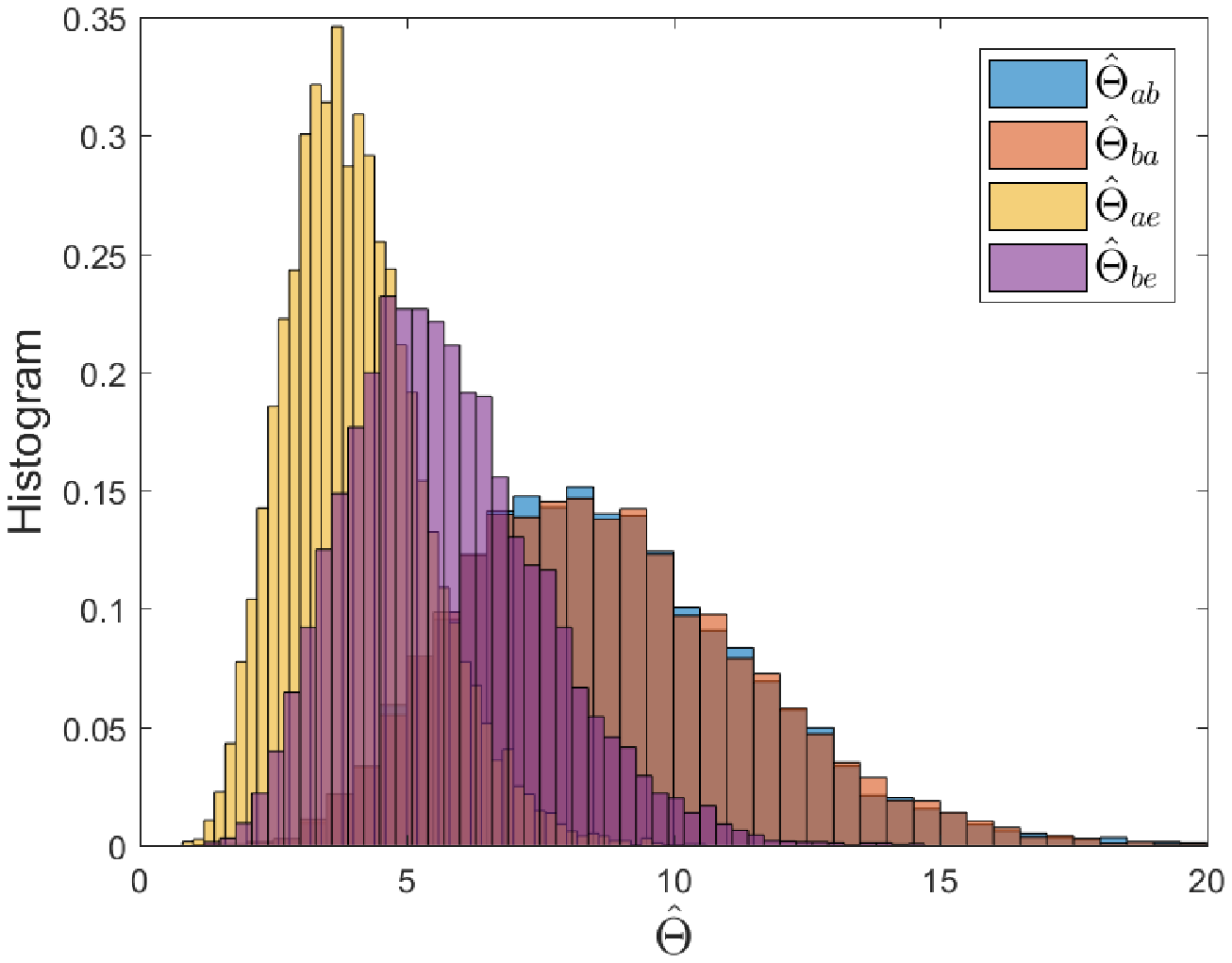} }
			\hfill
			
			\subfigure[]{
				\label{fig:pdfN20}
				\includegraphics[width=\textwidth]{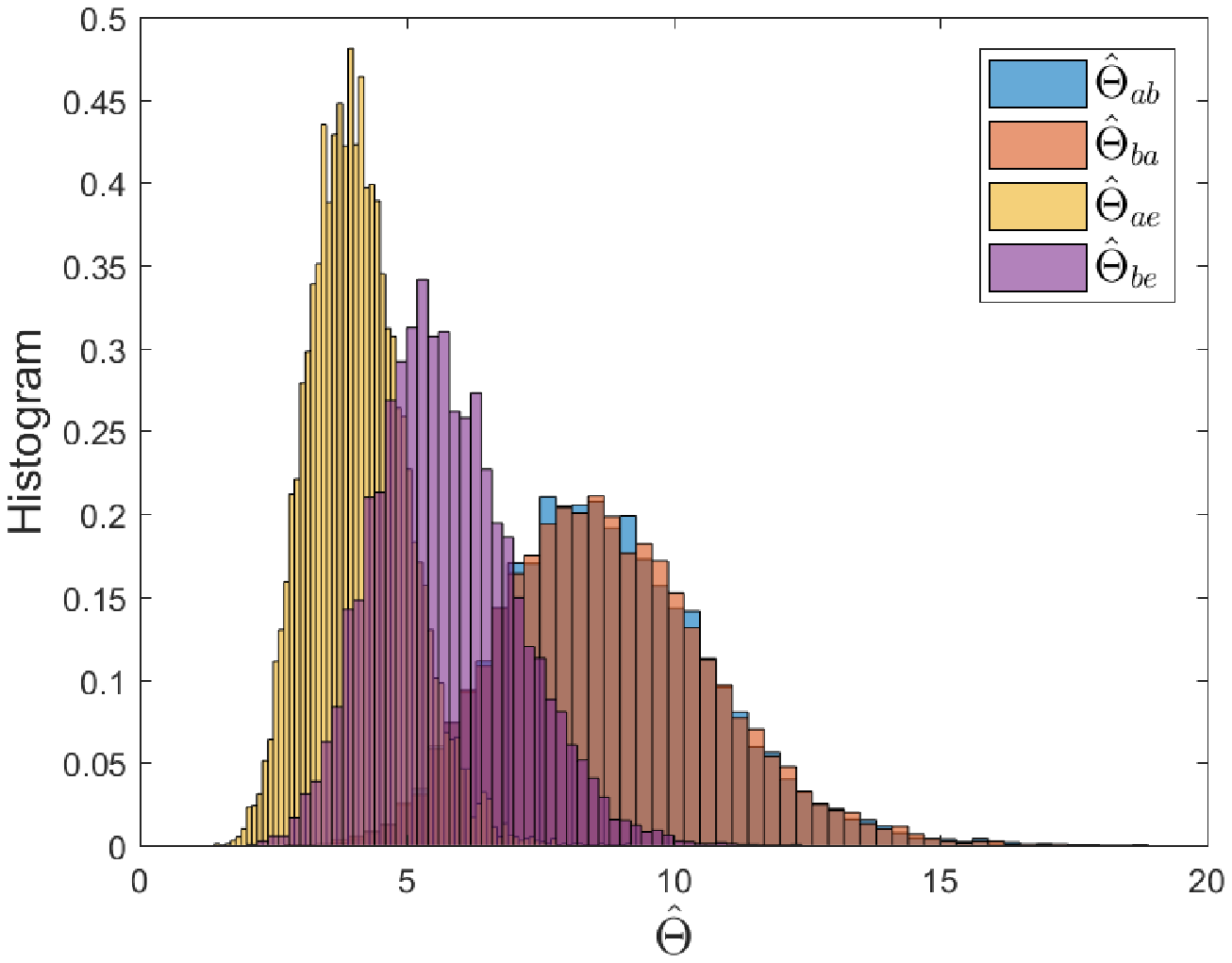} }
			\hfill
			\subfigure[]{
				\label{fig:pdfN50}
				\includegraphics[width=\textwidth]{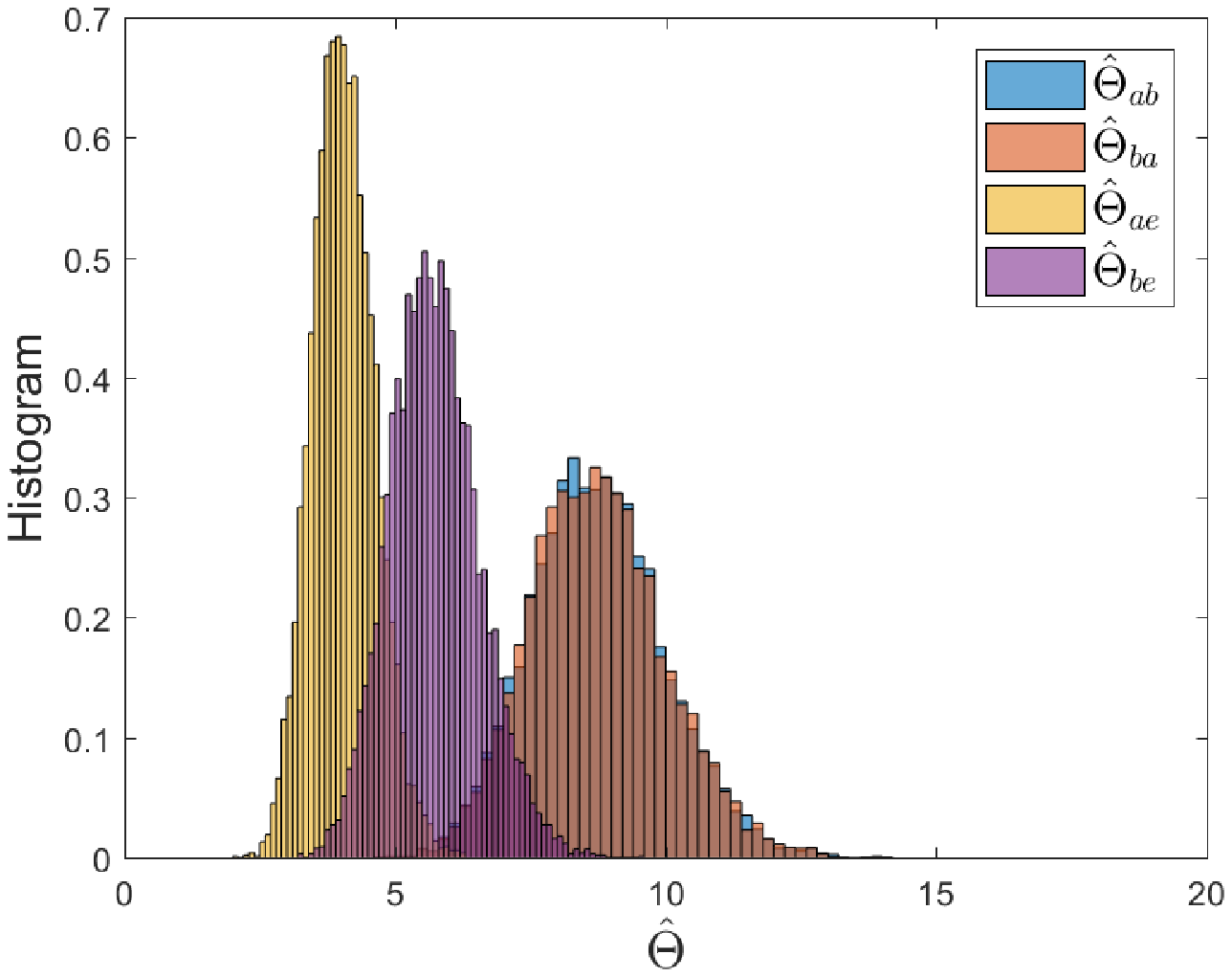} }
		\end{adjustbox}
	\end{center}
	\caption{Histograms for estimated NPSDSs for three different $N$ values, (a) $N=10$, (b) $N=20$ and (c) $N=50$.}
	\label{fig:MixHist-sc1}
\end{figure*}
Alice and Bob employ the proposed key generation scheme in Section II in order to obtain a secret key. In a single key duration, first Alice then Bob respectively transmits $N$ pilot symbols. Then, they estimate NPSDSs, and feed them into uniform quantizer. In the meantime, Eve observes the transmitted pilot symbols from both {\color{black}spacecraft}s, and obtain NPSDSs for each transmission. After each communication block, Alice and Bob update their mobility information, and consequently their Doppler frequency based observations are also updated. Each of the legitimate nodes generate single secret key, while Eve generates two different versions of the secret key. 

\begin{table}[t]
	\caption{Simulation parameters}
	\centering
	\begin{tabular}{|l|l|}
		\hline
		Carrier frequency  & 1 GHz   \\ \hline
		$||\hat{\mathbf{v}}_{a}||$ , $||\hat{\mathbf{v}}_{b}||$      & 6700 m/sc \\ \hline
		${v}_{e}$      & 2000 m/sc \\ \hline
		Symbol energy      & 10 dB   \\ \hline
		Modulation type    & BPSK    \\ \hline
		Noise variance     & 1 dB    \\ \hline
	    Symbol period     & $10^{-6}$ sc.    \\ \hline
	\end{tabular}
	\label{tab:simulation_parameters}
\end{table}

\begin{figure}[h]
	\centering
	\includegraphics[width=\linewidth]{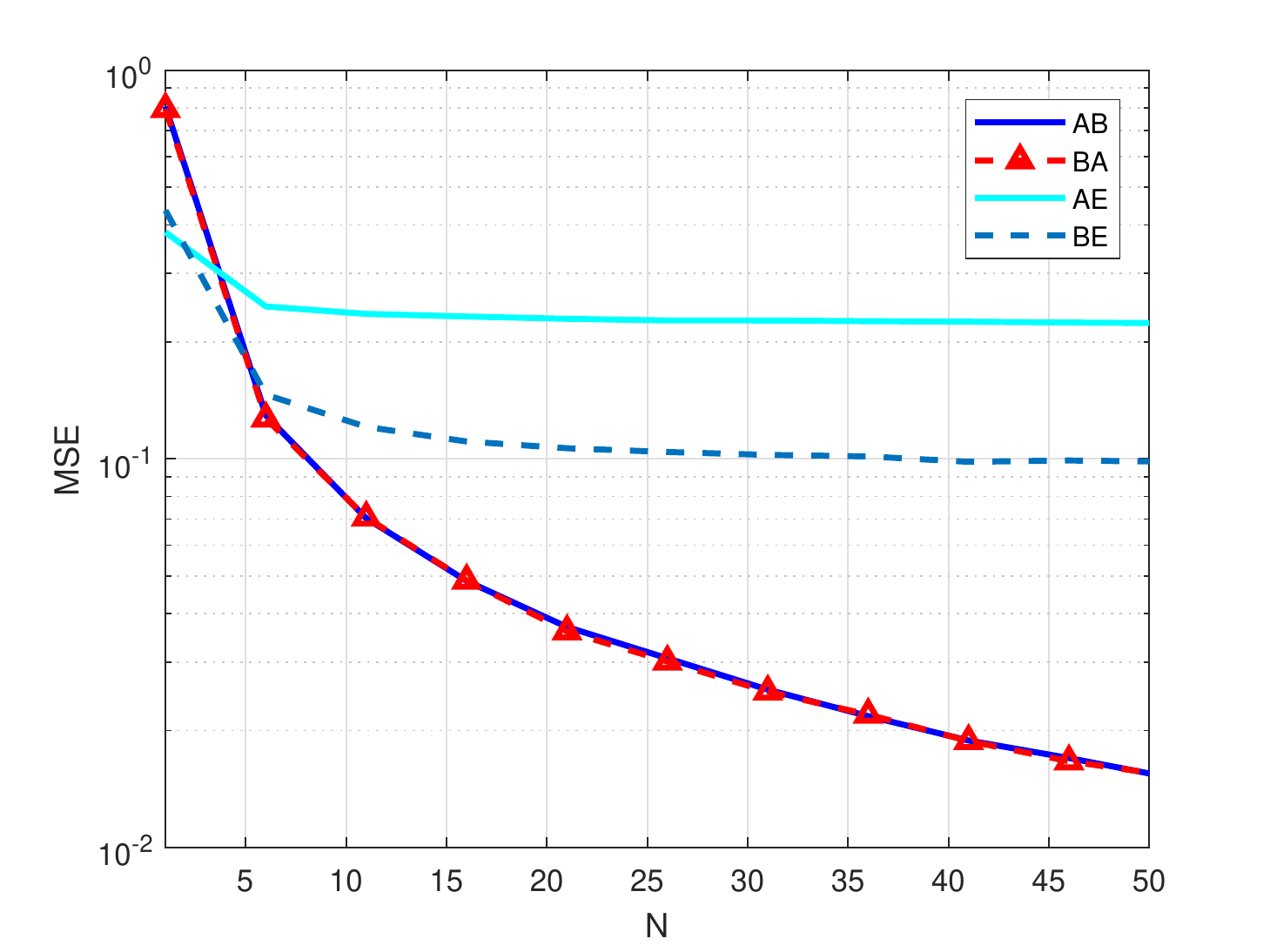}
	\caption{MSE vs. channel observations considering the different ISLs.}
	\label{fig:MSE}
\end{figure}

Figure \ref{fig:MixHist-sc1} shows the histograms of the estimated NPSDSs at Alice, Bob and Eve for a given $\Theta_t=5$. Since Eve captures both the transmission from Alice to Bob and from Bob to Alice, she obtains two different versions of the estimated NPSDSs. From left to right (Figure \ref{fig:pdfN10}-\ref{fig:pdfN20}-\ref{fig:pdfN50}), the number of observations increases. One remark is that as the number of observation increases, the variance of the pdfs become smaller. The distribution at Alice and Bob are almost identical as defined in (4) and (5). On the other hand, the pdfs at Eve diverges from Alice and Bob, since their relative velocities and Doppler frequencies are different. 

In order to focus on the relationship between the number of observations and the estimation disparities, we utilize mean squared error (MSE) as in Figure \ref{fig:MSE}. The MSE for the NPSDS  estimator for the $\textcolor{black}{m}-k$ link can be given as $$\text{MSE}_{\textcolor{black}{m}k}=\frac{1}{N} \sum_{i=0}^{N}|\hat{\Theta}_{\textcolor{black}{m}k}-\Theta_{t}|^2.$$ As indicated in the figure, the estimations at Alice and Bob converges as the number of observations increases. Even though estimation error at Eve also decreases, the error becomes stable after $N=5$. Since the relative velocity of Alice and Bob differs from the relative velocity of Alice and Eve, and the relative velocity of Bob and Eve, the Doppler frequency values observed at Eve is also different than Alice and Bob. Considering the Proposition \ref{prop:not_eavesdrop}, the only option for Eve is to utilize her estimation to obtain the secret key generated by Alice and Bob since she cannot retrieve the mobility information of Alice and Bob from her observations. The results in both Figure \ref{fig:MixHist-sc1} and Figure  \ref{fig:MSE} show that the NPSDSs or Doppler frequencies can be utilized as a secret source between two separately moving nodes.

\begin{figure}[tbh]
	\centering
	\includegraphics[width=\linewidth]{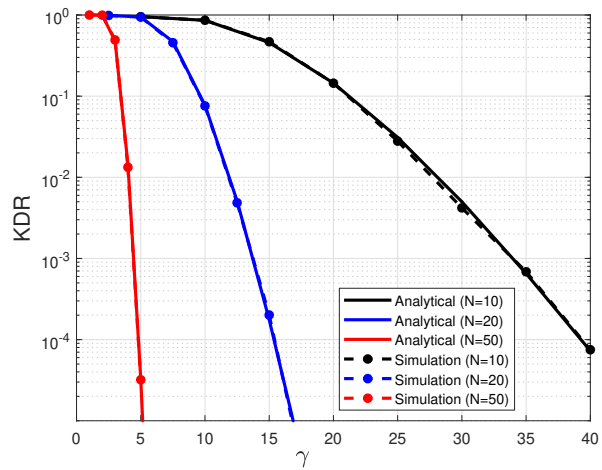}
	\caption{KDR vs. normalized quantization interval considering different number of channel taps.}
	\label{fig:KDR}
\end{figure}

\subsection{KDR Analysis}
KDR analysis illustrates the robustness of the secret key scheme. Since even one bit mismatch in the generated secret key results non-identical secret keys at Alice and Bob, the error rates should be very low to practically apply the secret key scheme. Numerically KDR in a single key duration can be described by
$$
KDR_{\textcolor{black}{m}k}= \begin{cases}
1, q_{\textcolor{black}{m}}\neq q_k \\ 0, q_{\textcolor{black}{m}}=q_k
\end{cases}.
$$
Considering $D$ number of key durations, KDR for {\color{black}spacecraft}s $k$ and $j$ becomes
$$
\text{KDR}_{\textcolor{black}{m}k}= \frac{\sum\limits_{t=1}^{D}KDR_{\textcolor{black}{m}k}(t)}{D}.
$$
Figure \ref{fig:KDR} provides KDR values of the generated keys at Alice and Bob for different normalized quantization intervals, where $\Gamma=\frac{\Delta}{\mathbb{E}\{\Theta_t\}}$. We consider a two-level uniform quantizer for the proposed secret key scheme. So the NPSDS estimations are mapped into two different symbols. Since as the number of observations at Alice and Bob increases, the error at their estimates decreases. Therefore, increment in the number of observation samples decreases KDR more dramatically. Note that, increment in the number of observation samples would also require transmitting more pilot symbols and reduces the efficiency of the protocol. Solid lines show the approximate KDR by (\ref{eq:final}) while the stars indicate the simulation results. Note that, the performance of the proposed protocol can further be improved by utilizing hybrid key generation protocol as proposed in \cite{hybrid}.

\section{Conclusion and Future Work Directions}
In this work, we have proposed a security mechanism for the inter-{\color{black}spacecraft} links (ISLs) first time in the literature. We have introduced a novel mobility model for future space missions, where the dynamic and static mobility components are modeled separately. By utilizing Doppler frequency shift as a secrecy source, we have proposed a first example of the mobility-based secret key generation in the literature, where this novel secrecy source paves the way for a secret key generation under static channel environments. We have provided the maximum achievable secret key rate for the Doppler frequency shift based secret key generation. The proposed  key generation procedure utilizes the nominal power spectral density samples (NSPDSs) as a projection of Doppler frequency shift measurements. A maximum likelihood (ML) estimation procedure for the NPSDSs has been introduced, and the corresponding key disagreement rate (KDR) of the procedure is analytically obtained. The numerical results highlight the consistency of the analytical bounds and approximations, and they illustrate the practical applicability of the proposed secret key generation procedure. 

{\color{black} As future work, we believe that utilizing Doppler frequency shift with oscillator-based frequency offset as a secrecy resource  would result into superposition of two independent random variables. Therefore, the total randomness obtained would increase, and we obtain higher key entropies in comparison with only carrier-frequency offset (CFO) or only Doppler frequency shift based methods. Especially when the mobility of the nodes decreases, the secret key generation system may switch from the Doppler based secrecy extraction to CFO based secret key extraction, or vice versa.} Another important future work would be comparing the performance of the proposed scheme and the fading based key generation schemes under the active physical layer attacks such as jamming. We believe that the mobility-based secret key generation would also be more resilient to active attacks since the effect of noise is negligible for the mobility-based key generation procedures.{ \color{black} Furthermore analyzing the proposed system against an attacker obtaining a radar system, and the utilization of jamming against this attacker is another possible future work.}
{\color{black}  \section*{Appendix: Timing Considerations}
The timing distance between Alice and Bob is represented by $\Delta t_{AB}$, where due to the long distances for spacecrafts, the most important factor in the timing difference would be the propagation time of the transmitted data. In this case, we can say that $$\begin{aligned}
\Delta t_{AB}&= \frac{||p_A(t)-p_B(t+\Delta t_{AB})||}{c}+\frac{||p_A(t+\Delta t_{AB}+t')}{c} \\ 
&-\frac{p_B(t+2\Delta t_{AB}+t')||}{c}.\end{aligned}$$

Let us assume that $||p_A(t)-p_B(t+\Delta t_{AB})||+||p_A(t+\Delta t_{AB}+t')-p_B(t+2\Delta t_{AB}+t')||= (1+\alpha)||p_A(t)-p_B(t)||,$ and $$\Delta t_{AB}=\frac{(1+\alpha)||p_A(t)-p_B(t)||}{c}.$$ 

We assume that the velocity of the spacecrafts are almost static during the time difference between the nodes. By saying this, we need to check whether $$\Delta t_{AB}\leq \frac{||v_{max}||}{||a_A||+||a_B||},$$ where $v_{max}$ denotes the maximum velocity change tolerance, and $a_A$, $a_B$ are respectively acceleration of Alice and Bob. Considering this, we would get the following condition for the spacecrafts in order to obtain identical Doppler frequency shift measurements
\begin{equation}
\frac{(1+\alpha)||p_A(t)-p_B(t)||}{c}\leq\frac{||v_{max}||}{||a_A||+||a_B||}
\end{equation}

As a numerical example, a velocity change around 1000 m/s would only create around 6 kHz Doppler frequency shift for the 2 GHz carrier frequency. Considering the measured Doppler frequencies would be around 60 KHz for the secret key generation, we can safely neglect the velocity changes less than 1000 m/s, in this case $||v_{max}||=1000$ m/s. In order to obtain a communication link between spacecrafts, antenna design is another important consideration. In order to get signals from the Moon, the radio receivers at Earth would require a high noise tolerance due to the extremely long distances. If, we consider that the spacecrafts may can only support  the 0.1 times the distance from Earth to Moon, which is $3.8 10^8$ m. By setting $\alpha=1$, we get $||a_A||+||a_B||<5000 \text{ m/s}^2$, which is rather reasonable considering the acceleration capabilities of the current spacecrafts. 
}

\balance

\bibliographystyle{IEEEtran}



\end{document}